\newcolumntype{R}[1]{@{\hspace{0.5mm}}>{\raggedleft\arraybackslash}p{#1}@{\hspace{1mm}}}
\DeclareMathOperator*{\argmin}{argmin}
\newcommand\LineComment[1]{%
\textcolor{black!60}{/*\ #1\ */}
}
\newcommand\myalgo[1]{\mbox{\textsc{\MakeLowercase{#1}}}}
\newcommand\myinput[1]{\mbox{\textsc{#1}}}
\begin{document}


\title{Bitmap Filter: Speeding up Exact Set Similarity Joins\\with Bitwise Operations}



\numberofauthors{3} 

\author{
%
%
\alignauthor
Edans F. O. Sandes\\
       \affaddr{University of Brasilia}\\
       \email{edans@unb.br}
\alignauthor
George L. M. Teodoro\\
       \affaddr{University of Brasilia}\\
       \email{teodoro@unb.br}
\alignauthor
Alba C. M. A. Melo\\
       \affaddr{University of Brasilia}\\
       \email{alves@unb.br}
}

\date{30 July 1999}

\maketitle

\begin{abstract}

The Exact Set Similarity Join problem aims to find all similar sets between two collections of sets, with respect to a threshold and a similarity function such as overlap, Jaccard, dice or cosine. The na\"ive approach verifies all pairs of sets and it is often considered impractical due the high number of combinations. So, Exact Set Similarity Join algorithms are usually based on the Filter-Verification Framework, that applies a series of filters to reduce the number of verified pairs. This paper presents a new filtering technique called Bitmap Filter, which is able to accelerate state-of-the-art algorithms for the exact Set Similarity Join problem. 
The Bitmap Filter uses hash functions to create bitmaps of fixed $b$ bits, representing characteristics of the sets. Then, it applies bitwise operations (such as \textit{xor} and population count) on the bitmaps in order to infer a similarity upper bound for each pair of sets. If the upper bound is below a given similarity threshold, the pair of sets is pruned. The Bitmap Filter benefits from the fact that bitwise operations are efficiently implemented by many modern general-purpose processors and it was easily applied to four state-of-the-art algorithms implemented in CPU: AllPairs, PPJoin, AdaptJoin and GroupJoin. Furthermore, we propose a Graphic Processor Unit (GPU) algorithm based on the na\"ive approach but using the Bitmap Filter to speedup the computation. The experiments considered 9 collections containing from 100 thousands up to 10 million sets and the joins were made using Jaccard thresholds from 0.50 to 0.95. The Bitmap Filter was able to improve 90\% of the experiments in CPU, with speedups of up to 4.50$\times$ and 1.43$\times$ on average.
Using the GPU algorithm, the experiments were able to speedup the original CPU algorithms by up to 577$\times$ using an Nvidia Geforce GTX 980 Ti. 

\end{abstract}

\section{Introduction}

Data analysts often need to identify similar records stored in multiple data collections. This operation is very common for data cleaning \cite{Chaudhuri:2006}, element clustering \cite{Broder:1997:Clustering} and record linkage \cite{Cohen:2000:DIU:352595.352598,Vernica:2010:MapReduce}. In some scenarios the similarity analysis aims to detect near duplicate records \cite{PPJoin:2011}, in which slightly different data representations may be caused by input errors, misspelling or use of synonyms \cite{PartEnum:2006}. In other scenarios, the goal is to find patterns or common behaviors between real entities, such as purchase patterns \cite{mann2016} and similar user interests \cite{Spertus:2005:ESM:1081870.1081956}. 

Set Similarity Join is the operation that identifies all similar sets between two collections of sets (or inside the same collection) \cite{mann2016,Jiang:2014}. In this case, each record is represented as a set of elements (tokens). For example, the text record \textit{``Data Warehouse''} may be represented as a set of two words \{\textit{Data}, \textit{Warehouse}\} or a set of bigrams \{\textit{Da}, \textit{at}, \textit{ta}, \textit{a\textvisiblespace}, \textit{\textvisiblespace{W}}, \textit{Wa}, \textit{ar}, \textit{re}, \textit{eh}, \textit{ho}, \textit{ou}, \textit{us}, \textit{se}\}. Two records are considered similar according to a similarity function, such as Overlap, Dice, Cosine or Jaccard \cite{Augsten:2013:SJR:2601748,Theobald:2008:SRE:1390334.1390431}. For instance, two records may be considered similar if the overlap (intersection) of their bigrams is greater or equal than 6, such as in ``Databases'' and ``Databazes''.

Many different solutions have been proposed in the literature for Set Similarity Join \cite{GramCount:2001,Chaudhuri:2006,AllPairs:2007,PPJoin:2011,AdaptJoin:2012,GroupJoin:2012,Gionis:1999:LSH,Zhai:2011:LSH,Satuluri:2012:LSH,Chakrabarti:2015:LSH,Yu:2017:LSH, SOHRABI20171:LSH,Zhang:2017:Exact}. With respect to the guarantee that all similar pairs will be found, these solutions are divided into Exact approaches \cite{GramCount:2001,Chaudhuri:2006,AllPairs:2007,PPJoin:2011,AdaptJoin:2012,GroupJoin:2012,Zhang:2017:Exact} and approximate aproaches \cite{Gionis:1999:LSH,Zhai:2011:LSH,Satuluri:2012:LSH,Chakrabarti:2015:LSH,Yu:2017:LSH, SOHRABI20171:LSH}. The approximate algorithms usually rely on the Local Sensitive Hashing technique (LSH) \cite{LSH:1998} and they are very competitive, but in the scope of this paper only the exact approaches will be considered.

The solutions for Exact Set Similarity Join are usually based on the Filter-Verification Framework \cite{mann2016, Wang:2017:LSR:3099622.3099624}, which is divided into two stages: a) the candidate generation uses a series of filters to produce a reduced list of candidate pairs; b) the verification stage verifies each candidate pair in order to check which ones are similar, with respect to the selected similarity function and threshold. 

The filtering stage is commonly based on Prefix Filter \cite{Chaudhuri:2006,AllPairs:2007} and Length Filter\cite{GramCount:2001}, which are able to prune a considerable number of candidate pairs without compromising the exact result. Prefix Filter\cite{Chaudhuri:2006,AllPairs:2007} is based on the idea that two similar sets must share at least one element in their initial elements, given a defined element ordering. Length Filter \cite{GramCount:2001} prunes candidate lists according to their lengths, considering that elements with a big difference in length will have low similarity. Other filters have also been proposed, such as the Positional Filter \cite{PPJoin:2011}, which is able to prune candidate pairs based on the position where the token occurs in the set, and the Suffix Filter \cite{PPJoin:2011}, that prunes pairs by applying a binary search on the last elements of the sets. 

There are many algorithms in the literature combining the aforementioned filters. AllPairs\cite{AllPairs:2007} uses the Prefix and Length filters, PPJoin \cite{PPJoin:2011} extends AllPairs with the Positional Filter, PPJoin+ \cite{PPJoin:2011} extends PPJoin with the Suffix Filter. AdaptJoin \cite{AdaptJoin:2012} extends PPJoin using a variable schema with adaptive prefix lengths and GroupJoin \cite{GroupJoin:2012} applies the filters in groups of similar sets, avoiding redundant computations. 
These algorithms usually rely on the assumption that the verification stage contributes significantly to the overall running time. However, in \cite{mann2016} a new verification procedure using an early termination condition was proposed that significantly reduced the execution time of the verification phase. With this, the proportion of execution time in the candidate generation phase increased and, as a consequence, 
the simplest filtering schemes are now able to produce the best performance. In \cite{mann2016}, the best results were obtained by AllPairs, PPJoin, GroupJoin and AdaptJoin, whereas AllPairs was the best algorithm on the largest number of experiments \cite{mann2016}. On the contrary, PPJoin+\cite{PPJoin:2011}, which uses a sophisticated suffix filter technique, was the slowest algorithm in \cite{mann2016}.

The state-of-the-art set similarity join algorithms still need to be improved in order to allow a faster join, otherwise very large collections may not be processed in a reasonable time. Based on the recent findings on the filter-verification trade-off overhead \cite{mann2016}, the authors claimed that future filtering techniques should invest in fast and simple candidate generation methods instead of sophisticated filter-based algorithms, otherwise the filters effectiveness will not pay off their overhead  when compared to the verification procedure. 

The main contribution of this paper is a new low overhead filter called Bitmap Filter, which is able to efficiently prune candidate pairs and improve the performance of many state-of-the-art exact set similarity join algorithms such as AllPairs\cite{AllPairs:2007}, PPJoin\cite{PPJoin:2011}, AdaptJoin\cite{AdaptJoin:2012} and GroupJoin \cite{GroupJoin:2012}. Using bitwise operations implemented by most modern processors, the Bitmap Filter is able to speedup the performance of AllPairs, PPJoin, AdaptJoin and GroupJoin algorithms by up to 4.50$\times$ (1.43$\times$ on average) considering 9 different collections and Jaccard threshold varying from $0.50$ to $0.95$. As far as we know, there is no other filter for the Exact Set Similarity Join problem that is based on efficient bitwise operations. 

The Bitmap Filter is sufficiently flexible to be used in the candidate generation or verification stages. Furthermore, it can be efficiently implemented in devices such as Graphical Processor Units (GPU). As a secondary contribution of this paper, we thus propose a GPU implementation of the Bitmap Filter that is able to speedup the join by up to $577\times$ using a Nvidia Geforce GTX 980 Ti card, considering 6 collections of up to 606,770 sets. 

The rest of this paper is structured as follows. Section \ref{sec:background} presents the background of the Set Similarity Join problem. Then, Section \ref{sec:bitmap_filter} proposes the Bitmap Filter and explains it in detail. Section~\ref{sec:implementations} shows the design of the proposed CPU and GPU implementations. Section \ref{sec:experimental_results} presents the experimental results and Section \ref{sec:conclusion} concludes the paper.

\section{Background}
\label{sec:background}

In this section, we discuss the Set Similarity Join problem. First, the problem is formalized (Section \ref{sec:problem}). Then, the Filter-Verification Framework is presented (Section \ref{sec:related_work}). After that, some commonly used filters employed in the Filter-Verification Framework are explained (Section \ref{sec:filter}). Finally, state-of-the-art algorithms are described (Section \ref{sec:algorithms}).

\subsection{Problem Definition}
\label{sec:problem}

Given two collections $R=\{r_1,r_2,\cdots,r_{|R|}\}$ and \linebreak$S=\{s_1,s_2,\cdots,r_{|S|}\}$, where sets $r_i$ and $s_j$ are made of tokens from $T$ ($r_i\subseteq T$ and $s_j\subseteq T$), the Set Similarity Join problem aims to find all pairs $(r_i, s_j)$ from $R$ and $S$ that are considered similar with respect to a given similarity function $sim$, i.e. $sim(r_i, s_j)\ge \tau$, where $\tau$ is a user defined threshold \cite{mann2016,PPJoin:2011}. When R and S are the same collection, the problem is defined as a self-join, otherwise it is called $RS$-join. The set similarity join operation can be written as presented in Equation \ref{eq:ssjoin} \cite{mann2016}.

\begin{equation}
\label{eq:ssjoin}
R~\tilde{\bowtie}~S = \{(r_i, s_j)\in R\times S | sim(r_i, s_j)\ge \tau\}
\end{equation}

The most used similarity functions for comparing sets are the Overlap, Jaccard, Cosine and Dice \cite{Augsten:2013:SJR:2601748,Theobald:2008:SRE:1390334.1390431}, as presented in Table \ref{tb:simfuncs}. The Overlap function returns the number of elements in the intersection of sets $r$ and $s$ (we suppress the subscripts $i$ and $j$ when the context is free of ambiguity). Jaccard, Cosine and Dice are similarity functions that normalize the overlap by dividing the intersection $|r \cap s|$ by the union $|r \cup s|$ (Jaccard), or by a combination of the lengths $|r|$ and $|s|$ (Cosine and Dice). The similarity functions can be converted to each other using the threshold equivalence presented in Table \ref{tb:simfuncs} \cite{Jiang:2014, Ribeiro:2011}, where $\tau$, $\tau_j$, $\tau_c$ and $\tau_d$ are the overlap, Jaccard, cosine and dice thresholds, respectively.

\begin{table}
\centering
\caption{Similarity functions and their equivalence to overlap\protect\cite{Jiang:2014, Ribeiro:2011}\label{tb:simfuncs}}
\begin{tabular}{|c|c|c|}
\hline
Sim. Function & $sim(r,s)$ & Equiv. Overlap \\
\hline
Overlap & $|r\cap s|$ & $\tau$\\[1.1ex]
\hline
Jaccard & $\frac{|r\cap s|}{|r\cup s|}$ & $\tau = \frac{\tau_j}{1+\tau_j}(|r|+|s|)$\\[1.1ex]
\hline
Cosine & $\frac{|r\cap s|}{\sqrt{|r|\cdot|s|}}$ & $\tau = \tau_c\sqrt{|r|\cdot|s|}$\\[1.1ex]
\hline
Dice & $\frac{2|r\cap s|}{|r|+|s|}$ & $\tau = \tau_d\frac{|r|+|s|}{2}$\\[1.1ex]
\hline
\end{tabular}
\end{table}

To solve the set similarity join problem, the na\"ive algorithm (Algorithm \ref{algo:naive}) compares all pairs from collections $R$ and $S$ and verifies if the similarity of each pair (with respect to the similarity function) is greater or equal than the desired threshold. Since this method verifies $|R|\times|S|$ elements, it does not scale to a large number of sets.

\begin{algorithm}
\caption{Na\"ive Algorithm \label{algo:naive}}
\begin{algorithmic}[1]
\Statex \textbf{Input:} set collections $R$ and $S$; threshold $\tau$
\Statex \textbf{Output:} $pairs=R~\tilde{\bowtie}~S$ of similar sets
\For{$r \in R$}
\For{$s \in S$}
\If {$verify(r,s,\tau)$}
\State $pairs \leftarrow pairs \cup (r,s)$
\EndIf
\EndFor
\EndFor
\State \Return $pairs$ \label{algo:basic:return}
\end{algorithmic}
\end{algorithm}

\subsection{Filter-Verification Framework}
\label{sec:related_work}

Many algorithms, such as AllPairs\cite{AllPairs:2007}, PPJoin\cite{PPJoin:2011}, \linebreak PPJoin+\cite{PPJoin:2011}, AdaptJoin\cite{AdaptJoin:2012}, and GroupJoin \cite{GroupJoin:2012}, have been proposed to reduce the number of verification  operations, aiming to solve the set similarity join problem efficiently. These algorithms typically employ filtering techniques using the Filter-Verification Framework, as shown in Algorithm \ref{algo:basic}.

\begin{algorithm}
\caption{Filter-Verification Framework \label{algo:basic}}
\begin{algorithmic}[1]
\Statex \textbf{Input:} set collections $R$ and $S$; threshold $\tau$
\Statex \textbf{Output:} $pairs=R~\tilde{\bowtie}~S$ of similar sets
\Statex \LineComment{Initialization}
\State $I \leftarrow index(S)$ \label{algo:basic:index}
\For{$r \in R$}
	\Statex {\hskip\algorithmicindent}\LineComment{Candidate Generation Stage}
	\State $candidate \leftarrow \{\}$
	\For{$t \in filter_1(r,\tau)$} \label{algo:basic:filter_1}
		\For{$s \in I[t]$}
			\If {$\textbf{not} filter_2(r,s,\tau)$} \label{algo:basic:filter_2}
				\State $candidate \leftarrow candidate \cup \{s\} $
			\EndIf
		\EndFor
	\EndFor
	\Statex {\hskip\algorithmicindent}\LineComment{Verification Stage}
	\For{$s \in candidate$}
		\If {$\textbf{not} filter_3(r,s,\tau)$} \label{algo:basic:filter_3}
			\If {$verify(r,s,\tau)$} \label{algo:basic:verify}
				\State $pairs \leftarrow pairs \cup (r,s)$ \label{algo:basic:similar_pairs}
			\EndIf
		\EndIf		
	\EndFor 
	\State \Return $pairs$ \label{algo:basic:return}
\EndFor
\end{algorithmic}
\end{algorithm}

In the initialization step, the $S$ collection is indexed (line~\ref{algo:basic:index}) with respect to the tokens found in each set $s \in S$, such that $I[t]$ stores all sets containing token $t$. Then, for each set $r \in R$, the algorithm iterates in a two phase loop: candidate generation and verification stage. 

In the candidate generation loop (lines 4-7), the algorithm iterates over tokens $t$ from set $r$ (filtered by function \emph{filter}$_1$ in line \ref{algo:basic:filter_1}) in order to find in index $I$ all sets $s \in S$ that also contain token $t$. Each set $s$ may also be filtered (function \emph{filter}$_2$ in line \ref{algo:basic:filter_2}) and, if the set is not pruned, it is inserted into a candidate list. 

The verification stage (lines 8-11) checks, for each unique candidate $r$, if the pair $(r,s)$ is similar with respect to the similarity function and threshold $\tau$ (line~\ref{algo:basic:verify}) and, if so, the pair is inserted in the similar pair list (line~\ref{algo:basic:similar_pairs}). A filter may also be applied in the verification loop (function \emph{filter}$_3$ in line~\ref{algo:basic:filter_3}) to reduce the number of verified pairs. Finally, the similar pairs are returned (line~\ref{algo:basic:return}).

\subsection{Filtering Strategies}
\label{sec:filter}
This section explains the most widely used filtering techniques in the literature. 

\begin{figure*}[hbtp]
\centering
\begin{subfigure}[t]{0.23\textwidth}
\includegraphics[width=4cm]{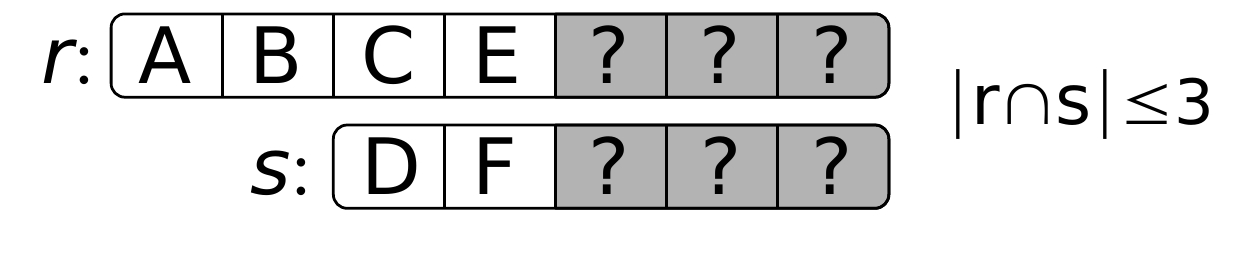}
\caption{Prefix Filter Technique\label{fig:prefixfilter}}
\end{subfigure}
\begin{subfigure}[t]{0.23\textwidth}
\includegraphics[width=4cm]{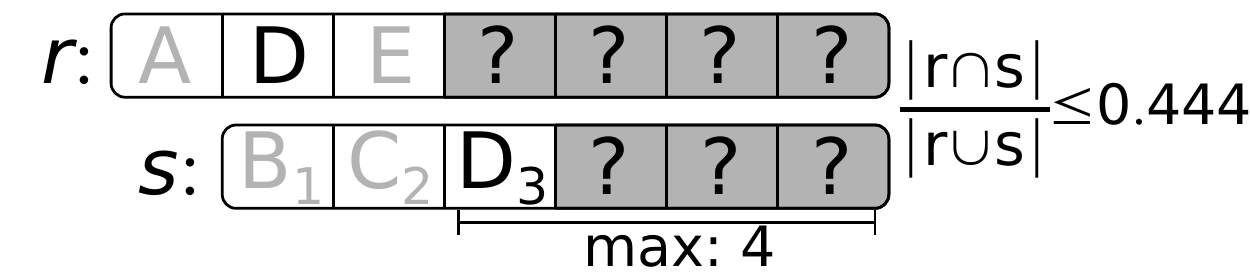}
\caption{Positional Filter Technique\label{fig:positionalfilter}}
\end{subfigure}
\begin{subfigure}[t]{0.23\textwidth}
\includegraphics[width=4cm]{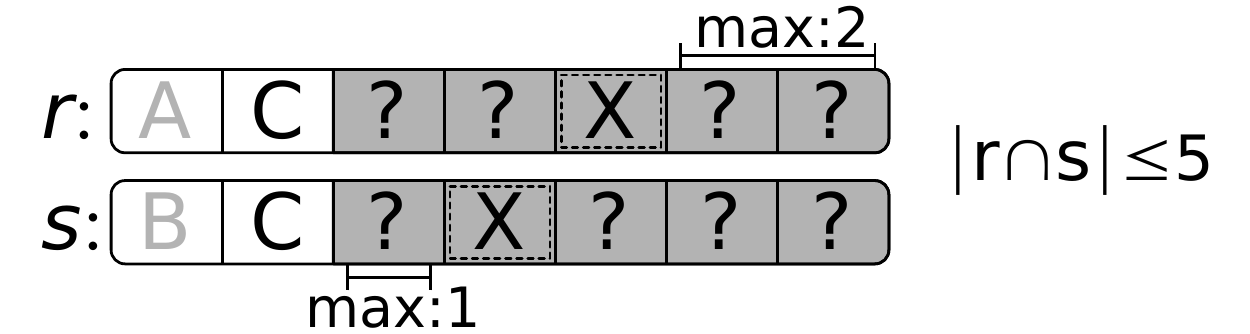}
\caption{Suffix Filter Technique\label{fig:suffixfilter}}
\end{subfigure}
\begin{subfigure}[t]{0.23\textwidth}
\includegraphics[width=4cm]{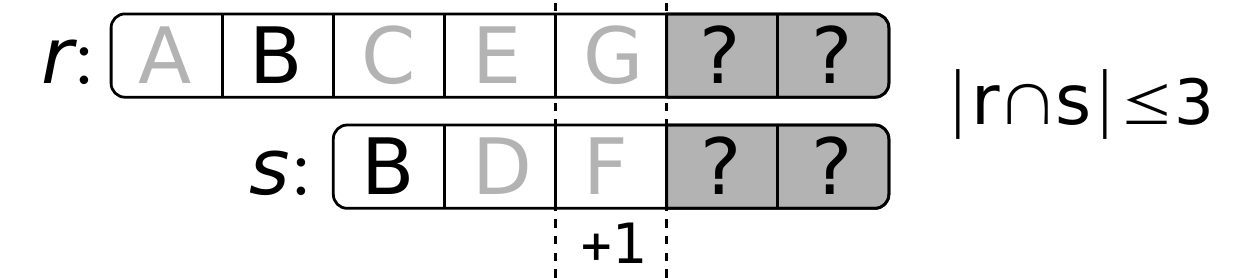}
\caption{$\ell$-Prefix Filter ($\ell=2$)\label{fig:adaptjoin}}
\end{subfigure}
\caption{Filtering strategies}
\end{figure*}

\subsubsection{Prefix Filter}
\label{sec:filter:prefix}

The Prefix Filter technique \cite{Chaudhuri:2006,AllPairs:2007} creates an index over the first tokens in each set, considering that the tokens are sorted, usually based on the token frequencies. The sizes of the selected prefixes in each set are such that if two sets $r$ and $s$ are similar (i.e. $sim(r,s)\ge \tau$), then their prefixes will have at least one token in common. For instance, Figure \ref{fig:prefixfilter} shows two sets $r$ and $s$ with sizes \mbox{$|r|=7$} and \mbox{$|s|=5$}. Considering that the overlap similarity must be greater or equal than $\tau=4$, then the prefix sizes are respectively \mbox{\emph{prefix}$(r)=4$} and \mbox{\emph{prefix}$(s)=2$}. In this scenario, if there is no overlap in the prefixes (white cells), then it will be impossible to have at least $\tau=4$ overlaps considering the remaining cells (in gray). In fact, considering the overlap similarity function, the prefix size for a set with size $|r|$ is \mbox{\emph{prefix}$(r)=|r|-\tau+1$} \cite{AllPairs:2007}. The prefix sizes for other similarity functions are presented in Table \ref{tb:lengthfilter} \cite{Jiang:2014}.


After the prefix is created, the algorithm produces a list of candidate pairs for each set $r$. This list is composed of all the sets $s\in S$ that are stored in the index pointed by the elements in \emph{prefix}$(r)$. The candidate pairs are then verified using the similarity function and, if the threshold $\tau$ is attained, this pair is marked as similar.

\subsubsection{Length Filter}
\label{sec:filter:length}

Length Filter \cite{GramCount:2001} benefits from the idea that two sets $r$ and $s$ with very different lengths tend to be dissimilar. Given the size of set $|r|$ and the threshold $\tau$, Table \ref{tb:lengthfilter} presents the upper and lower bounds for the size $|s|$ such that, outside of these bounds, the similarity $sim(r,s)$ is surely lower than the threshold. Table \ref{tb:lengthfilter} presents these bounds considering the overlap, Jaccard, cosine and dice similarity functions \cite{mann2016}. 

\begin{table}
\centering
\caption{Filter Parameters per Similarity Function\label{tb:lengthfilter}}
\setlength\tabcolsep{4.0pt}
\begin{tabular}{|c|r@{\hspace{2mm}$\le|s|\le$\hspace{2mm}}l|c|}
\hline
Function & \multicolumn{2}{c|}{Lower/Upper Bounds} & Prefix Length \\
\hline
Overlap & $\tau$ & $\infty$  & $|r|-\tau+1$ \\[1.1ex]
\hline
Jaccard & $|r|\tau_j$ & $\frac{|r|}{\tau_j}$& $\left\lfloor (1-\tau_j)|r| \right\rfloor + 1$ \\[1.1ex]
\hline
Cosine & $|r|\tau_c^2$ & $\frac{|r|}{\tau_c^2}$& $\left\lfloor (1-\tau_c^2)|r| \right\rfloor + 1$ \\[1.1ex]
\hline
Dice & $|r|\frac{\tau_d}{2-\tau_d}$ & $\frac{2-\tau_d}{\tau_d}|r|$ & $\left\lfloor (1-\frac{\tau_d}{2-\tau_d})|r| \right\rfloor + 1$\\[1.1ex]
\hline
\end{tabular}
\end{table}

Using this filter, the algorithms can ignore a set whose size is outside the defined bounds. Considering that the sets in the list are ordered by their lengths, whenever a set from an index list contains more elements than the defined upper bound, all the remaining elements of that index can be ignored.

\subsubsection{Positional Filter}
\label{sec:filter:positional}

Positional Filter \cite{PPJoin:2011} tracks the position where a given token appears in the sets. Using this position, it is possible to determine a tighter upper bound for the similarity metric, and this tends to reduce the number of candidate pairs. For example, Figure \ref{fig:positionalfilter} presents two sets $s$ and $r$ with sizes $r=7$ and $s=6$. A Jaccard threshold $\tau_j=0.6$ will produce prefixes with size $3$ in both sets. In this figure, the elements in \emph{prefix}$(s)$ are presented with their position in the set ($B_1$, $C_2$ and $D_3$). The first matching token ``D'' occurs in position 3 in set $s$ and, with this information and the sizes $|r|$ and $|s|$, it can be inferred that the maximum possible overlap is $4$. Using the same idea, the minimum possible union is $9$ and, thus, the maximum Jaccard coefficient is $\frac{4}{9}=0.444$. Since the upper Jaccard coefficient is below the threshold $\tau_j=0.6$, this candidate can be pruned. The same idea can be applied to cosine and dice similarity functions.


\subsubsection{Suffix Filter}
\label{sec:filter:suffix}

The Suffix Filter \cite{PPJoin:2011} takes one probe token $t_p$ from the suffix of set $r$ and applies a binary search algorithm over set $s$ in order to divide these sets in two parts (left/right). Since the tokens in each set are sorted, tokens in the left side of one set cannot overlap with tokens in the right side of the other set. Using this idea, it can be inferred a tighter upper bound for the overlap between two sets. For example, in Figure \ref{fig:suffixfilter} there are two sets $r$ and $s$ with sizes $|r|=|s|=7$, where the required overlap threshold between the sets is $\tau=6$. In the figure, prefixes are represented by white cells and the suffixes by gray cells. Comparing the tokens in the prefixes, it is possible to find one match (token ``C''), but 5 additional matches still need to be found. Then, the Suffix Filter chooses the middle token (``X'') from the suffix of set $r$ and it seeks for the same token in set $s$. In the figure, the token ``X'' divides the set $r$ into two subsets with two elements each, and the set $s$ is divided into two subsets, with one element to the left and three elements to the right. Considering that the elements are sorted, the maximum number of additional overlaps in each side is one element in the left of ``X'' and two elements in the right. This would lead to an upper bound of 5, which is less than the required overlap ($\tau=6$). Thus, this pair can be pruned.


\subsubsection{Variable-Length Prefix Filter}
\label{sec:filter:variable_prefix}

The original Prefix Filter relies on the idea that the prefixes of sets $r$ and $s$ must share at least one token in order to satisfy the threshold. The Variable-Length Prefix Filter \cite{AdaptJoin:2012} extends this idea using adaptive prefix lengths. For example, using a $2$-length prefix schema, it is required at least $2$ common tokens in the prefixes. The prefix size for a set with size $|r|$ in a $\ell$-prefix schema is \mbox{\emph{prefix}$_\ell(r)=|r|-\tau+\ell$}. Figure \ref{fig:adaptjoin} shows an example with sets $r$ and $s$ with sizes $|r|=7$ and $|s|=5$. Considering a required overlap $\tau=4$,  the $2$-prefix schema chooses \mbox{\emph{prefix}$_2(r)=5$} and \mbox{\emph{prefix}$_2(s)=3$}. Since the prefixes (white cells) present less than two overlaps, it is impossible to satisfy the threshold $\tau=4$, so the pair is pruned. 


\subsection{Algorithms}
\label{sec:algorithms}

 This section presents an overview of state-of-the art algorithms that use the Filter-Verification Framework (Algorithm~\ref{algo:basic}). Table \ref{tb:algorithms} shows five algorithms, related acronyms and filters used by the algorithm. In this paper, we will focus on the best four algorithms as stated by \cite{mann2016}: AllPairs, PPJoin, AdaptJoin, and GroupJoin. 

\begin{table}
\centering
\caption{Algorithms and the selected filters. \label{tb:algorithms}}
\begin{tabular}{|@{ }c@{ }l|l|l|}
\hline
\multicolumn{2}{|c|}{\textbf{Algorithm}} & \multicolumn{1}{c|}{$\boldsymbol{filter_1}$} & \multicolumn{1}{c|}{$\boldsymbol{filter_2}$} \\
\hline
\myalgo{ALL} & AllPairs\cite{AllPairs:2007} & Prefix & Length\\
\hline
\myalgo{PPJ} & PPJoin\cite{PPJoin:2011} & Prefix & Length/Pos. \\
\hline
\myalgo{PPJ+} & PPJoin+\cite{PPJoin:2011} & Prefix & Length/Pos./Suffix \\
\hline
\myalgo{GRO} & GroupJoin\cite{GroupJoin:2012} & Prefix & Length/Pos. \\
\hline
\myalgo{ADA} & AdaptJoin\cite{AdaptJoin:2012} & $\ell$-Prefix & Length\\
\hline
\end{tabular}
\end{table}

\textbf{AllPairs}\cite{AllPairs:2007} is an algorithm that uses the Prefix Filter (Section~\ref{sec:filter:prefix}) and Length Filter (Section~\ref{sec:filter:length}). In the Filter-Verification Framework, the Prefix Filter is applied as \emph{filter}$_1$ and the Length Filter is applied as \emph{filter}$_2$.

\textbf{PPJoin}\cite{PPJoin:2011} extends the AllPairs algorithm, including the Positional Filter (Section~\ref{sec:filter:positional}). In the Filter-Verification Framework, the Prefix Filter is applied as \emph{filter}$_1$ and the Length and Positional filters are applied as \emph{filter}$_2$.

\textbf{PPJoin+}\cite{PPJoin:2011} extends the PPJoin algorithm with the Suffix Filter (Section~\ref{sec:filter:suffix}) over the candidate pairs. In the Filter-Verification Framework, the Prefix Filter is applied as \emph{filter}$_1$ and the Length, Positional and Suffix filters are applied as \emph{filter}$_2$.

\textbf{AdaptJoin}\cite{AdaptJoin:2012} is an algorithm that uses the Length Filter (Section~\ref{sec:filter:length}) and the Variable-Length Prefix Filter (Section~\ref{sec:filter:variable_prefix}). The Filter-Verification Framework is modified such that additional iterations are executed with different prefix sizes, according to the characteristics of the collection. The Variable-Length Prefix Filter is applied as \emph{filter}$_1$ and the Length Filter is applied as \emph{filter}$_2$.

\textbf{GroupJoin}\cite{GroupJoin:2012} applies the Prefix Filter (Section~\ref{sec:filter:prefix}), Length Filter (Section~\ref{sec:filter:length}), and Positional Filter (Section \ref{sec:filter:positional}) over groups of sets containing the same prefixes and sizes. In the verification stage of the Filter-Verification \linebreak Framework, the grouped candidate pairs are expanded to individual elements and all possible pairs are verified. With this grouping schema, the filters can be applied only to one representative of the group, what considerably reduces the number of comparisons for groups with many elements. Figure \ref{fig:groupjoin} presents two groups of sets containing 3 elements sharing the same prefix.

\begin{figure}[t]
\centering
\includegraphics[width=5cm]{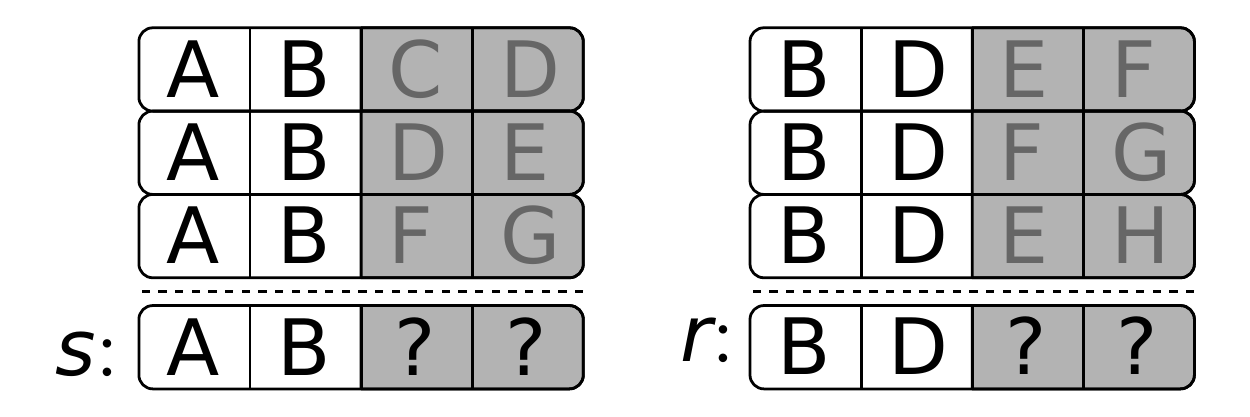}
\caption{Grouped elements in GroupJoin\label{fig:groupjoin}}
\end{figure}

\section{Bitmap Filter}
\label{sec:bitmap_filter}

In this section we propose the Bitmap Filter, which uses a new bitwise technique capable of speeding up similarity set joins without compromising the exact result. The bitwise operations are widely used in modern microprocessors and most of the compilers are capable to map them to fast hardware instructions. 

First, some preliminaries are given in Section \ref{sec:bitmap:preliminaries}. Then, Section \ref{sec:bitmap:definitions} presents how the bitmap is generated. Section \ref{sec:bitmap:upperbound} shows how the overlap upper bound can be calculated using bitmaps. Then, in Section \ref{sec:bitmap:cutoff}, a cutoff point is defined in order to determine when the Bitmap Filter is effective. Finally, the pseudocode of the Bitmap Filter is given in Section \ref{sec:bitmap:algorithm}.

\subsection{Preliminaries}
\label{sec:bitmap:preliminaries}

The similarity functions between two sets $r$ and $s$ can be calculated over their binary representation. For instance, sets $r=\{1,5,8\}$ and $s=\{3,4,5,6\}$ can be mapped to binary sets represented by $b_r=10001001$ and $b_s=00111100$, where $bit_k$ at position $k$ represents the occurrence of token $k$ in the corresponding set. In this representation, the size of the binary sets must be sufficiently large to map all possible tokens in the collection. Considering that each bit in the binary set is a dimension, collections with large number of distinct tokens will produce a high dimensionality binary sets that are usually very sparse.

Modern microprocessors implement many bitwise instructions that can be used to speedup computations of similarity functions over binary sets. One of those instructions is the \textit{population count} (popcount), which is used to count the number of ``ones'' in the binary representation of a numeric word \cite{nehalem2008inside}. For example, the overlap of the sets $r$ and $s$ mentioned in the previous paragraph can be calculated by $popcount(w_r \land w_s) = popcount(00001000) = 1$, using only two instructions in general microprocessors (one \textit{AND} instruction and one \textit{POPCNT} instruction). The same idea can be extended to Jaccard coefficient: $\frac{popcount(w_r \land w_s)}{popcount(w_r \lor w_s)}=\frac{popcount(00001000)}{popcount(10111101)}=\frac{1}{6}$. 

High dimensionality binary sets may be implemented by the concatenation of two or more fixed-length words, but as the dimensionality increases, more instructions are necessary to calculate the similarities, decreasing the performance of the computation. The Bitmap Filter proposed in this section reduces the dimensionality of the binary sets without compromising the exactness of the Set Similarity Join results. Using a binary set with reduced size, called Bitmap from now on, the bitwise instructions may be much more effective.

\subsection{Bitmap Generation}
\label{sec:bitmap:definitions}

Consider a set $s=\{s_1, s_2, \cdots, s_n\}$ composed by tokens $s_k \in T = \{1,2,3,\cdots,|T|\}$. Let $b_s$ be a bitmap with $b$ bits, representing set $s$. The bit at position $i \in [1..b]$ is represented by $b_s[i]$. Let $h(t): T \rightarrow [1..b]$ be a hash function that maps each token $t\in T$ to a specific bit position $i$. 

To create the bitmap $b_s$ from the set $s$, we define a generation function $Bitmap(s): T^* \rightarrow {0,1}^b$ that maps a subset of tokens into the bitmap $b_s$. Three different $Bitmap(s)$ generation functions are proposed in this paper: Bitmap-Set, Bitmap-Xor and Bitmap-Next. The difference between them is in how they handle collisions produced by the hash function $h(t)$.

\textbf{Bitmap-Set}: For each element $s_k\in s$, the bit at position $i=h(s_k)$ will be set. If multiple elements map to the same $i$ position, only the first element will effectively change the bit status (the other elements will leave this bit set to `one'). Algorithm \ref{algo:bitmap-set} presents the pseudocode for this function.

\begin{algorithm}
\caption{Bitmap-Set \label{algo:bitmap-set}}
\begin{algorithmic}[1]
\Function{Bitmap-Set}{$s = \{s_1,s_2,\cdots,s_n\}$}
\State $b_s \gets 00000 \cdots 0$
\For{$s_k \in s$}
\State $b_s[h(s_k)]=1$
\EndFor
\State \Return $b_s$
\EndFunction
\end{algorithmic}
\end{algorithm}

\textbf{Bitmap-Xor}: For each element $s_k\in s$, the bit at position $i=h(s_k)$ will have its value changed ($1\rightarrow 0$ or $0\rightarrow 1$). In the end, the bit $b[i]$ will remain set only if there is an odd number of elements mapped to position $i$. Algorithm \ref{algo:bitmap-xor} presents the pseudocode for this function, where $\oplus$ is the \textit{xor} operation.

\begin{algorithm}
\caption{Bitmap-Xor \label{algo:bitmap-xor}}
\begin{algorithmic}[1]
\Function{Bitmap-Xor}{$s = \{s_1,s_2,\cdots,s_n\}$}
\State $b_s \gets 00000 \cdots 0$
\For{$s_k \in s$}
\State $b_s[h(s_k)]=b[h(s_k)] \oplus 1$
\EndFor
\State \Return $b_s$
\EndFunction
\end{algorithmic}
\end{algorithm}

\textbf{Bitmap-Next}: For each element $s_k\in s$, if the bit at position $i=h(s_k)$ is unset, set it to 1, otherwise set the next unset bit after position $i$, circulating to the beginning of the bitmap if it reaches the last bit. This method ensures that there will be exactly $n=|s|$ 1's in the bitmap, unless $n>b$. In this case, all $b$ bits will be set. Algorithm \ref{algo:bitmap-next} presents the pseudocode for this method.

\begin{algorithm}
\caption{Bitmap-Next\label{algo:bitmap-next}}
\begin{algorithmic}[1]
\Function{Bitmap-Next}{$s = \{s_1,s_2,\cdots,s_n\}$}
\If{$(n \ge b)$}
	\State $b_s \gets 11111 \cdots 1$
\Else
	\State $b_s \gets 00000 \cdots 0$
	\For{$s_k \in s$}
	\State $i \gets h(s_k)$
	\While{$b_s[i]==1$}
	\State $i = (i+1) \mod b$
	\EndWhile
	\State $b_s[i]=1$
	\EndFor
\EndIf
\State \Return $b_s$
\EndFunction
\end{algorithmic}
\end{algorithm}

Figure \ref{fig:bitmapexample} shows an example of bitmap generation considering the bitmap size $b=16$ and a set $r$ with 10 tokens. The hash function $h(x)$ is represented by the arrows pointing to each bitmap slot. The three bitmaps $b_r$ were produced from Bitmap-Set, Bitmap-Xor and Bitmap-Next, respectively from top to bottom. Compared to Bitmap-Set, it can be seen that the bitmap generated by Bitmap-Xor has 1 different bit and the Bitmap-Next has 3 different bits. Each generation method will fit better in different circumstances that will be explored in Section \ref{sec:bitmap:upperbound}.

\begin{figure}[hbtp]
\centering
\includegraphics[width=5cm]{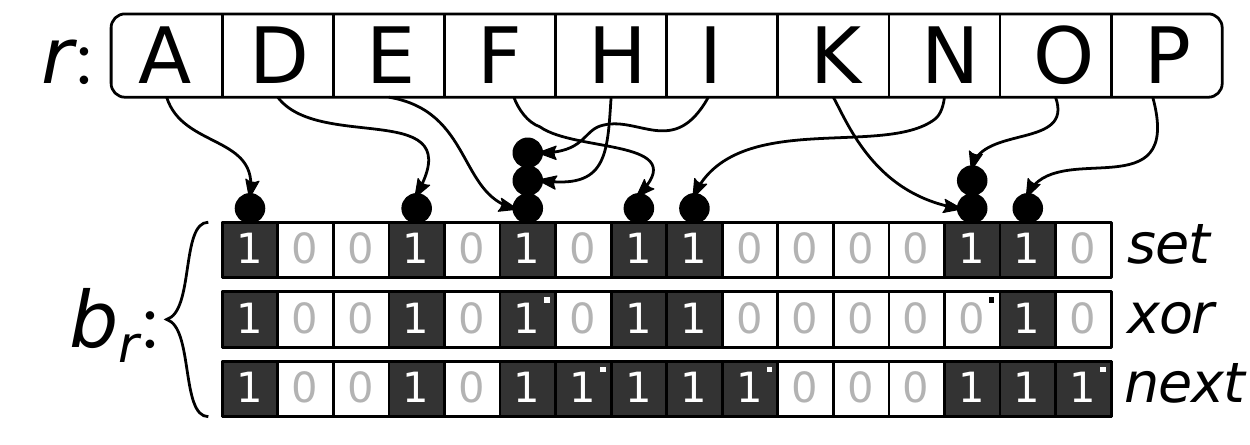}
\caption{Bitmap example\label{fig:bitmapexample}}
\end{figure}

\subsection{Overlap Upper-bound using Bitmaps}
\label{sec:bitmap:upperbound}

Given two sets $r$ and $s$ and their respective bitmaps $b_s$ and $b_r$ produced by one of the three $bitmap$ functions (Section \ref{sec:bitmap:definitions}), this section shows how to infer an upper-bound for the overlap similarity function ($|r \cap s|$). Whenever the upper-bound is not sufficient to attain the similarity threshold $\tau$ given as input, then the pair of sets $r$ and $s$ may be pruned.

Intuitively, for each occurrence of a different bit in the same position $i$ in bitmaps $b_s$ and $b_r$ (i.e. $b_s[i] \neq b_r[i]$), we can infer that there is at least one element from $r$ or $s$ that does not occur in the intersection $r \cap s$ of the sets. Given that, it can be found that the maximum overlap between these sets is the sum of the set's sizes $|r|+|s|$ minus the number of different bits (\textit{xor}) in their bitmaps (also known as \textit{hamming distance}). 

In order to prove that the overlap upper bound holds for the Bitmap-Set, Bitmap-Xor and Bitmap-Next, it must be noted that the bitmap generation functions may be represented as a series of algebraic operations. Let $\hat{h}(s_k)$ be the bitmap constructed by a single bit set at position $h(s_k)$. For instance, if $h(s_k)=4$ and $b=8$, then \linebreak $\hat{h}(s_k)=00010000$. Functions Bitmap-Set, Bitmap-Xor and Bitmap-Next are related to a series of binary operations $\ast$ over the bitmaps, as in $\hat{h}(s_1) \ast \hat{h}(s_2) \ast \cdots \ast \hat{h}(s_n)$. For simplicity, this series will also be represented by the prefix notation $\ast(s_1,s_2,\cdots,s_n)$. The $\ast$ operation is associated to the logic of each loop iteration in Algorithms \ref{algo:bitmap-set}, \ref{algo:bitmap-xor} and \ref{algo:bitmap-next}. For instance, in Bitmap-Xor, $\ast$ is the \textit{xor} operation and, in Bitmap-Set, $\ast$ is the \textit{or} operation. The Bitmap-Next $\ast$ operation is constructed with the same logic intrinsic to the lines 7 to 10 in Algorithm \ref{algo:bitmap-next}, where the collisions are chained to the next unset bit until the bitmap is saturated. It must be noted that, in the three methods, $\ast$ is commutative ($(x \ast y) = (y \ast x)$) and associative ($(x \ast y) \ast z) = x \ast (y \ast z)$), so if we modify the order of the operands $s_1, s_2, \cdots, s_n$, the resultant bitmap will be the same (e.g. $\ast(s_1, s_2, s_3, s_4)=\ast(s_2, s_4, s_1, s_3)$).  Given this observation, we have Theorem \ref{theorem:lowerbound_noncommon}.

\newtheorem{theorem}{Theorem}

\begin{theorem}
\label{theorem:lowerbound_noncommon}
Let two sets $r=\{r_1,r_2,\cdots,r_{|r|}\}$ and \linebreak $s=\{s_1,s_2,\cdots,s_{|s|}\}$ and their bitmaps $b_r$ and $b_s$ produced by Bitmap-Set, Bitmap-Xor or Bitmap-Next. 
The overlap $|r\cap s|$ is restricted to the upper bound defined by Equation \ref{eq:upper_bound_overlap}

\begin{equation}
|r\cap s| \le \left\lfloor \frac{|r|+|s|-popcount(b_r \oplus b_s)}{2} \right\rfloor
\label{eq:upper_bound_overlap}
\end{equation}

\end{theorem}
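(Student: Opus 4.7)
My plan is to exploit the associativity/commutativity of the generating operation $\ast$ (noted in the text) to split each bitmap into a ``common'' part and an ``exclusive'' part, and then show that each exclusive token perturbs the bitmap by at most one bit. A triangle inequality for the Hamming distance then yields the bound.

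More concretely, let $c = r \cap s$, $r' = r \setminus s$, $s' = s \setminus r$, so that $|r'| + |s'| = |r| + |s| - 2|r\cap s|$. By commutativity and associativity of $\ast$, I can reorder the operands and write $b_r = B_c \ast \hat h(r'_1) \ast \cdots \ast \hat h(r'_{|r'|})$ and $b_s = B_c \ast \hat h(s'_1) \ast \cdots \ast \hat h(s'_{|s'|})$, where $B_c$ is the bitmap obtained by running the chosen generation function on $c$ alone. This is the step where the algebraic remark preceding the theorem is crucial; without it, $B_c$ would not be well-defined independently of the ordering.

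The core lemma I would then establish is a ``one-token, one-bit'' property: for any bitmap $B$ and any token $t$, the bitmap $B' = B \ast \hat h(t)$ satisfies $\mathrm{popcount}(B \oplus B') \le 1$. This is immediate for Bitmap-Set (a bit is either newly set or already set) and for Bitmap-Xor (exactly one bit flips). For Bitmap-Next the argument is a short case analysis: either the bitmap is already saturated (no bit changes), or the linear probe terminates at some unset bit, which is the unique bit toggled. Iterating this lemma $|r'|$ times from $B_c$ to $b_r$ and applying the triangle inequality for the Hamming metric gives $\mathrm{popcount}(B_c \oplus b_r) \le |r'|$, and likewise $\mathrm{popcount}(B_c \oplus b_s) \le |s'|$.

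Finally, by the triangle inequality,
\begin{equation*}
\mathrm{popcount}(b_r \oplus b_s) \le \mathrm{popcount}(b_r \oplus B_c) + \mathrm{popcount}(B_c \oplus b_s) \le |r'| + |s'| = |r| + |s| - 2|r \cap s|,
\end{equation*}
which rearranges to $|r \cap s| \le (|r|+|s|-\mathrm{popcount}(b_r \oplus b_s))/2$, and the floor may be inserted because the left-hand side is an integer. The main obstacle I anticipate is the Bitmap-Next case: verifying both that $\ast$ is genuinely associative/commutative under the linear-probing rule, and that a single insertion cannot alter more than one bit even when probing wraps around, requires a little care. Once those two facts are pinned down, the rest of the argument is a direct triangle-inequality computation.
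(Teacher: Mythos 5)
Your proposal is correct and follows essentially the same route as the paper's proof: both reorder the operands via the commutativity/associativity of $\ast$ so the common tokens $r\cap s$ yield identical partial bitmaps, then argue that each non-common token perturbs its bitmap by at most one bit, so $popcount(b_r \oplus b_s) \le |r\setminus s| + |s\setminus r| = |r|+|s|-2|r\cap s|$. Your explicit ``one-token, one-bit'' lemma combined with the Hamming triangle inequality is just a slightly more formal packaging of the paper's counting argument.
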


\begin{proof}
Let the common elements of sets $r$ and $s$ be $C = r \cap s$ and the non-common elements be $U = (r \setminus s) \cup (r \setminus s)$. Since the bitmap generation functions can be represented by an associative and commutative operation $\ast$, the order of their operands does not affect the final bitmap. So, without loss of generality, the elements of sets $r$ and $s$ are rearranged such that the first $|C|$ elements are the common elements, i.e., $\{r_1, r_2, \cdots, r_{|C|}\} = \{s_1, s_2, \cdots, s_{|C|}\}=C$. Since the first $|C|$ elements of sets $r$ and $s$ are the same, their partial bitmaps $b'_r=\ast(\{r_1, r_2, \cdots, r_{|C|}\})$ and $b'_s=\ast(\{s_1, _2, \cdots, s_{|C|}\})$ will also be the same (i.e. there are no different bits in $b'_r$ and $b'_s$).

Then, each remaining operand (non-common elements in $U$) is able to change at most one bit in the partial bitmaps $b'_r$ and $b'_s$. So, if there is one different bit from the final bitmaps $b_r$ and $b_s$, surely it was caused by a non-common element and, since each element can change at most one bit, then the number of different bits is lower or equal than the number of non-common elements $|U|$. The number of different bits from $b_r$ and $b_s$ can be counted by \linebreak $popcount(b_r \oplus b_s)$ (hamming distance). So, the lower bound for $|U|$ can be defined by Equation \ref{eq:lower_bound_hamming}.

\begin{equation}
|U| = |r \setminus s| + |s \setminus r| \ge popcount(b_r \oplus b_s)
\label{eq:lower_bound_hamming}
\end{equation}

Considering that $|r| + |s| = |r \setminus s| + |s \setminus r| + 2|r \cap s|$, Equation \ref{eq:lower_bound_hamming} can be directly transformed into the upper bound defined by Equation \ref{eq:upper_bound_overlap}.
\end{proof}

Figure \ref{fig:bitmapfilter} shows an example of two bitmaps $b_r$ and $b_s$ generated by the Bitmap-Set method. Then, the hamming distance is calculated using bitwise operations ($popcount(b_r \oplus b_s)$). In Figure \ref{fig:bitmapfilter}, the $b_r \oplus b_s$ operation produces a word with 4 ones and, considering Equation \ref{eq:upper_bound_overlap}, the overlap upper bound is $8$. Using the Bitmap-Xor and Bitmap-Next methods, the overlap upper bound would be $7$.

\begin{figure}[hbtp]
\centering
\includegraphics[width=5cm]{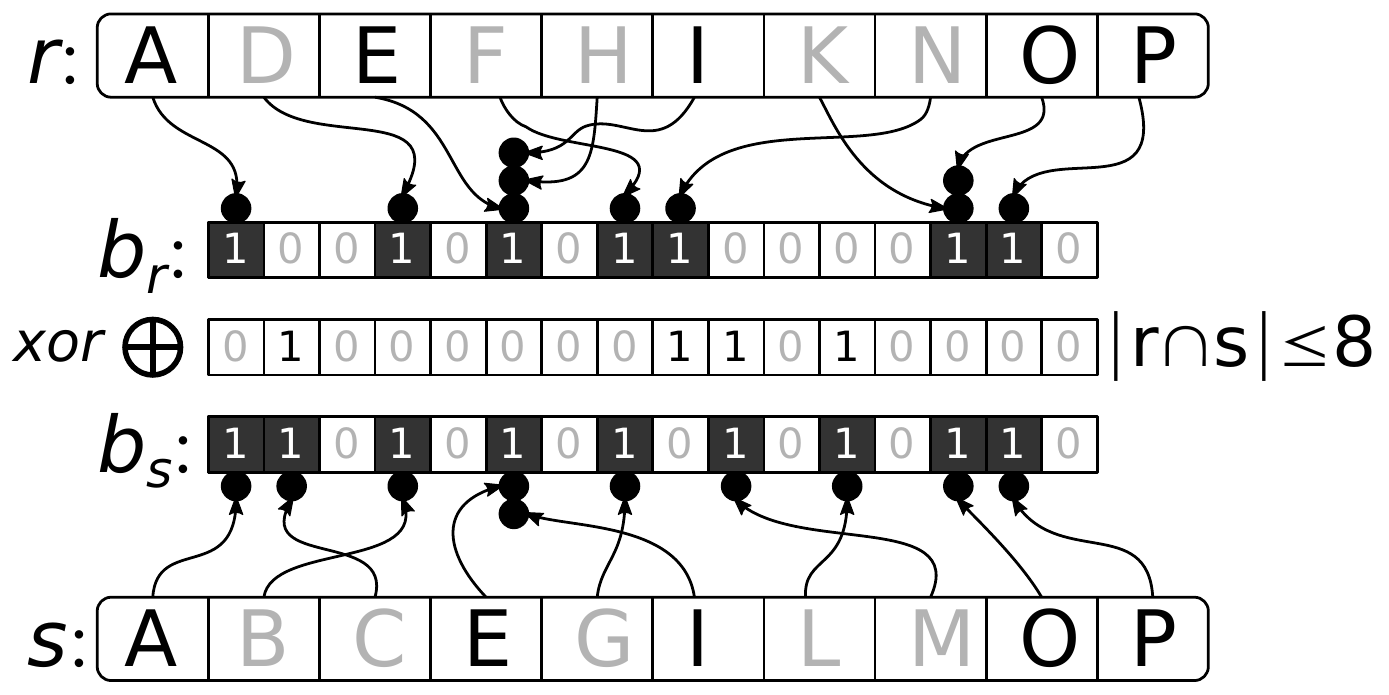}
\caption{Overlap Upper-bound\label{fig:bitmapfilter}}
\end{figure}

\subsection{Expected Overlap Upper-bound}
\label{sec:bitmap:expected_upperbound}

Due to collisions in the hash functions $h(t)$, two sets may produce similar bitmaps even if they do not share similar tokens. As the probability of collision is increased, the filtering effectiveness is reduced because it will loose the capability of distinguishing between similar and dissimilar sets. Intuitively, the chance for a collision is increased whenever the number of tokens in the original sets $r$ and $s$ is increased or the size $b$ of the bitmaps is reduced. This produces a trade-off between the filtering efficiency (precision) and the reduction of the bitmap dimension.

Assuming that the hash function distributes the tokens uniformly in each $b$ bits of the bitmaps, a probability analysis is able to determine, for each bitmap generation method, the expected number of collisions given two sets with $n$ random tokens. Given the expected number of collisions, it is also possible to infer the expected overlap upper-bound using Equation \ref{eq:upper_bound_overlap}. Equations \ref{eq:e_mark}, \ref{eq:e_xor}, and \ref{eq:e_next} present the expected overlap upper bound $\mathbb{E}(b,n)$ for a given $b$ bitmap size and $n$ tokens, considering the Bitmap-Set, Bitmap-Xor and Bitmap-Next respectively. These equations are based on the estimation of collision probabilities in hash functions \cite{PEYRAVIAN1998171}. Furthermore, the probabilities were verified by a Monte-Carlo simulation, where 100,000 random pairs of sets $r$ and $s$ were generated with $n$ tokens and the overlap upper-bound was obtained for each pair. Running the simulation for each $n \in [1,128]$ and for $b=64$, the average results produced by the simulation were almost identical to those derived from Equations \ref{eq:e_mark}, \ref{eq:e_xor}, and \ref{eq:e_next}, with an average error below 0.012\%. 

\begin{align} 
{\underset{\mathclap{mark}}{\mathlarger{\mathbb{E}}}(b,n)} &= n+\frac{(b-1)^{2n}}{b^{2n-1}}-\frac{(b-1)^{n}}{b^{n-1}}
\label{eq:e_mark}
\\
{\underset{\mathclap{xor}}{\mathlarger{\mathbb{E}}}(b,n)} &= {n-\frac{b}{2}{\mathlarger{\sum}_{\mathclap{k=1,3,\cdots}}^{2n}}} {{2n}\choose{k}}(\frac{1}{b})^k(\frac{b-1}{b})^{2n-k} 
\label{eq:e_xor}
\\
{\underset{\mathclap{next}}{\mathlarger{\mathbb{E}}}(b,n)} &= min(\frac{n^2}{b},n)
\label{eq:e_next}
\end{align}

Figure \ref{fig:expected_upper_bound} plots the upper bounds defined by Equations \ref{eq:e_mark}, \ref{eq:e_xor}, and \ref{eq:e_next} considering $b=64$ and $n$ varying from 1 to 256. The left $y$-axis presents a scale with the normalized overlap, given by the overlap divided by $n$, and the right $y$-axis presents a scale with the equivalent Jaccard upper bound (Table \ref{tb:simfuncs}). As the $y$-value increases, more difficult will be to distinguish between similar and non-similar pairs (the lower the $y$ value, better is the method). A $y$-value of $1$ represents an upper bound equal to the number of tokens $n$ (useless filter), and a $y$-value of $0$ represents a zero upper-bound. For example, considering the Bitmap-Set and the Bitmap-Xor, the expected normalized overlap upper bound at $n=55$ is around $0.72$ (or $0.84$ considering the equivalent Jaccard metric). So, a Jaccard threshold of $\tau_j=0.84$ will be expected to filter 50\% of dissimilar pairs of sets composed of $55$ random tokens. 

\begin{figure}[t]
\includegraphics[width=8cm]{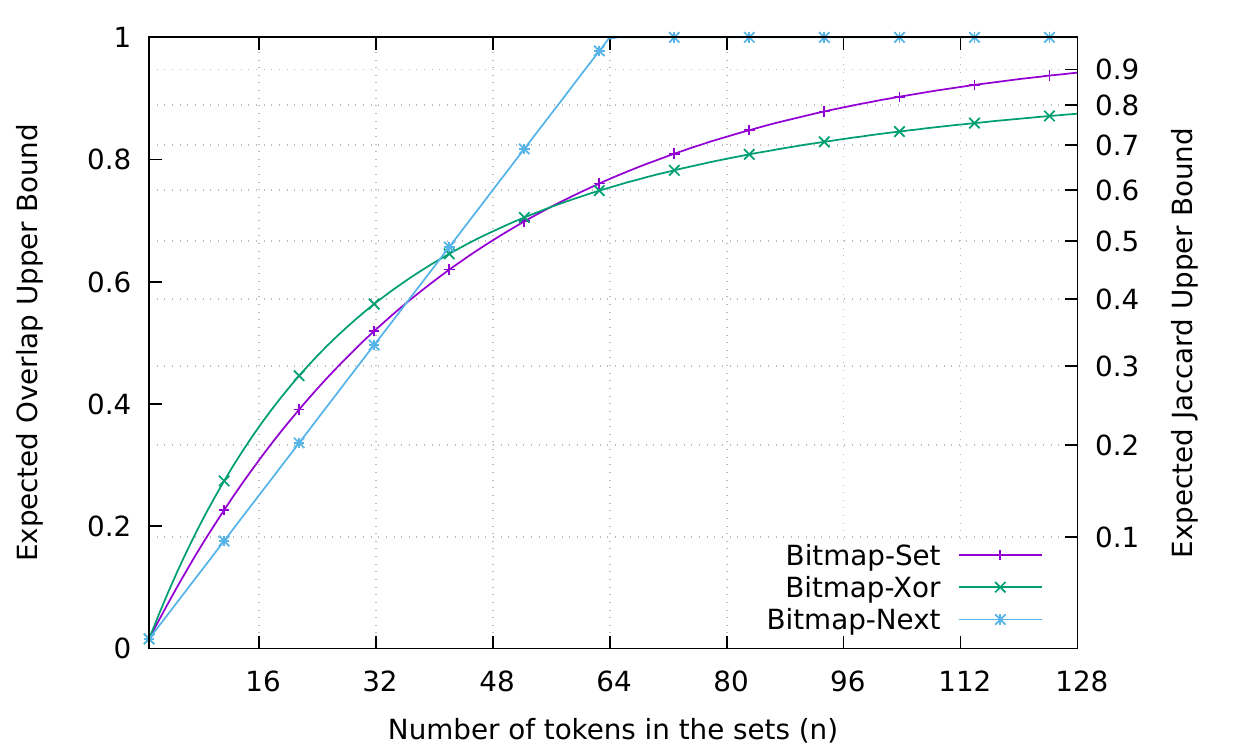}
\caption{Expected upper bounds ($b=64$)\label{fig:expected_upper_bound}}
\end{figure}

\subsection{Cutoff Point for Bitmap Filter}
\label{sec:bitmap:cutoff}

 Given a bitmap size $b$ and the overlap threshold $\tau$, the cutoff point $\omega(b,\tau)$ is defined as the maximum number of tokens $n$ at which the Bitmap Filter can efficiently distinguish between similar and dissimilar sets. The cutoff point $w(b,\tau)$ can be defined in terms of the $\mathbb{E}(b,n)$ function (Equations \ref{eq:e_mark}, \ref{eq:e_xor} and \ref{eq:e_next}) such that $w(b,\tau)=n$ implies $\mathbb{E}(b,n)=\tau$. In Figure \ref{fig:expected_upper_bound}, the average normalized overlap upper bound at $b=64$ and $n=55$ is around $0.72$. So, using a threshold $\tau=0.72$, the Bitmap Filter precision will be effective with sets with up to $n=55$ tokens and, with more tokens, the precision will drop-off significantly, since the filter will not be able to discard the majority of dissimilar pairs. So, the cutoff point for $\tau_j=0.72$ will be $\omega(64,0.72)=55$.

\begin{figure}[t]
\includegraphics[width=8cm]{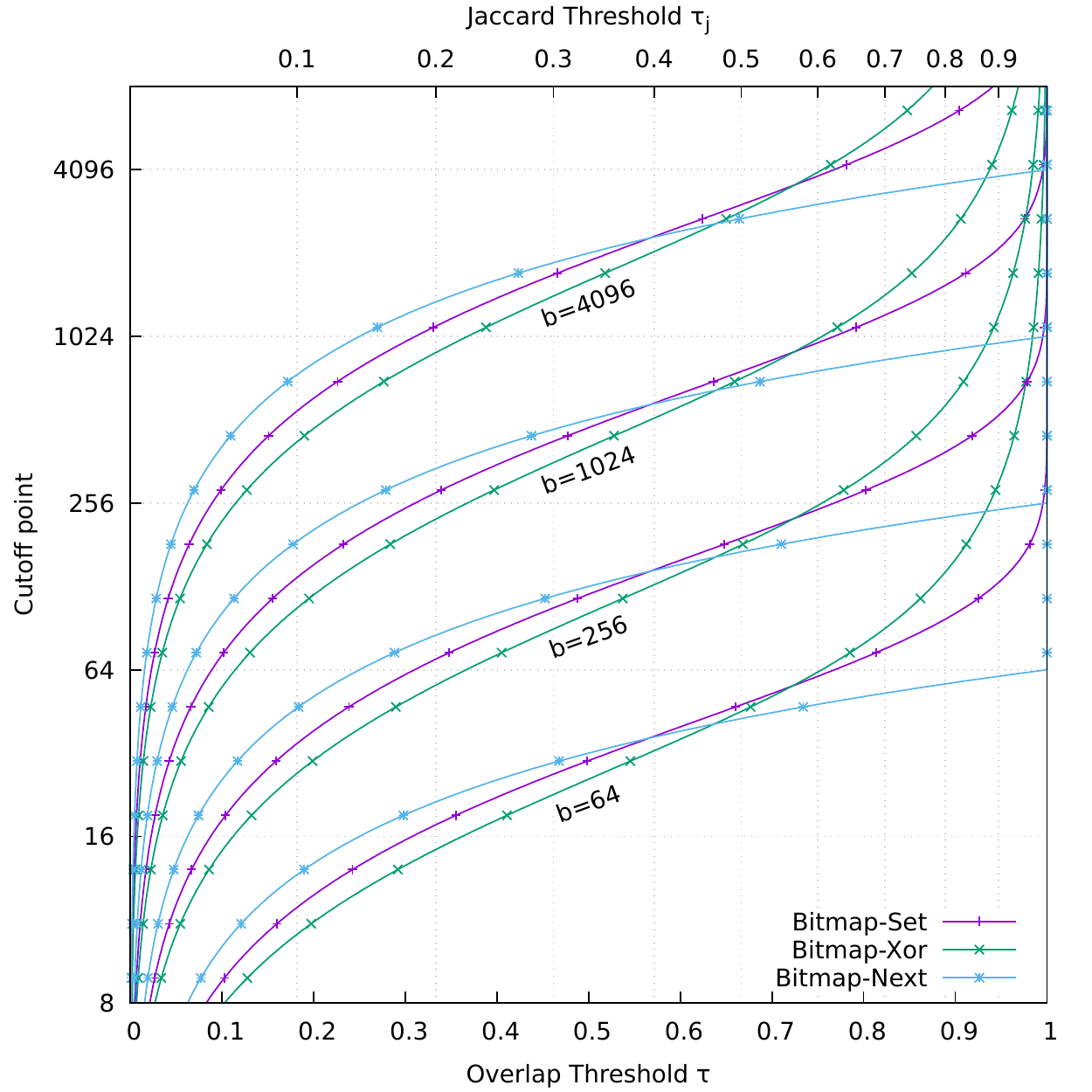}
\caption{Cutoff values (higher is better)\label{fig:cutoff}}
\end{figure}

Figure~\ref{fig:cutoff} presents the cutoff point for Bitmap-Set, Bitmap-Xor and Bitmap-Next, with sizes $b \in \{64, 256, 1024, 4096\}$. A similar plot pattern is observed for any $b$ greater than 64. Analyzing the cutoff difference between Bitmap-Xor and Bitmap-Set at high thresholds (e.g. above $0.8$), it is noticeable that Bitmap-Xor presents higher cutoffs. For instance, with $b=1024$ and Jaccard threshold $\tau_j=0.9$, Bitmap-Xor attained a cutoff at $n=4983$, whereas the cuttoff point of Bitmap-Set is $2129$. In other words, the Bitmap-Xor will still be effective with $2.3\times$ more tokens than Bitmap-Set. For Jaccard threshold $0.8$, this proportion is $1.47\times$.

In Figures~\ref{fig:expected_upper_bound} and~\ref{fig:cutoff}, it can be seen that there is a preferable bitmap generation method for each threshold range: Bitmap-Next presents the higher cutoff when $\tau \le 0.56$; Bitmap-Set is slightly better between $0.56 < \tau < 0.73$; Bitmap-Xor is preferred when $\tau \ge 0.73$. This pattern is observed for any value of $b$ greater than 64. So, a combined bitmap generation can be created using the preferred bitmap generation method for each $\tau$ interval (Algorithm \ref{algo:bitmap-combined}). Figure~\ref{fig:cutoff_combined} presents the cutoff point $\omega(b,\tau)$ for the combined bitmap generation method, where the $y$-axis scale is relative to bitmap size $b$.

\begin{algorithm}
\caption{Bitmap-Combined \label{algo:bitmap-combined}}
\begin{algorithmic}[1]
\Function{Bitmap-Combined}{$s = \{s_1,s_2,\cdots,s_n\}$}
\If{$\tau \le 0.56$} \State \Return \Call{Bitmap-Next}{s} 
\ElsIf{$\tau \ge 0.73$} \State \Return \Call{Bitmap-Xor}{s} 
\Else{ }\State \Return \Call{Bitmap-Set}{s}
\EndIf
\EndFunction
\end{algorithmic}
\end{algorithm}

\begin{figure}[t]
\includegraphics[width=8cm]{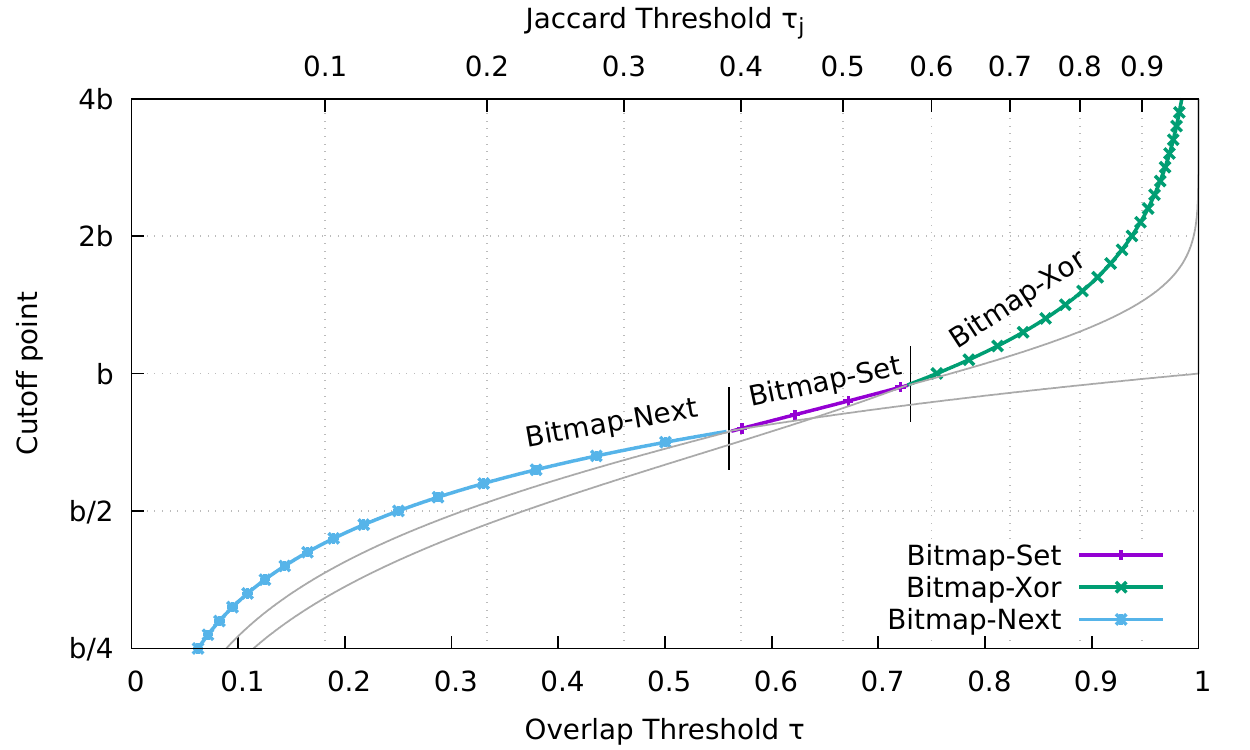}
\caption{Best cutoffs\label{fig:cutoff_combined}}
\end{figure}

\subsection{Bitmap Filter Pseudocode}
\label{sec:bitmap:algorithm}

The pseudocode for the Bitmap Filter is presented in Algorithm \ref{algo:proposal}. This pseudocode is suitable for the Filter-Verification Framework (Algorithm \ref{algo:basic}) and it is divided in two sections: the initialization procedure (lines 1-4) is run only once for the entire execution; the Bitmap\_Filter function (lines 5-10) can be executed at \emph{filter}$_2$ or \emph{filter}$_3$ in Algorithm \ref{algo:basic}.

The initialization precomputes the bitmap for every set in the collections $R$ and $S$. As stated in Algorithm \ref{algo:bitmap-combined}, the Bitmap generation method is selected accordingly to threshold $\tau$. The initialization also calculates the cutoff point (Section \ref{sec:bitmap:cutoff}) for the selected bitmap size $b$ and threshold $\tau$. The cutoff calculation is precomputed using the combined method presented in Figure \ref{fig:cutoff_combined}.

Therefore, for a given candidate pair $r$ and $s$, the function Bitmap\_Filter executes the proposed filter. If the length $|r|$ is above the cutoff point (line 7), the Bitmap Filter will be ignored. Otherwise, the filter calculates the overlap upper bound (line 8) as defined in Equation \ref{eq:upper_bound_overlap}. If the upper bound is below threshold $\tau$ (line 9), then the filter prunes the pair.

\begin{algorithm}
\caption{Proposed Algorithm - Bitmap Filter \label{algo:proposal}}
\vspace{0.1cm}
\begin{algorithmic}[1]
\Procedure{initialization}{$R$,$S$,$\tau$,$b$}
\For{$r \in \{R\cup S\}$}
\State $b_r \leftarrow \Call{Bitmap}{r}$
\EndFor
\State $cuto\mathit{ff}\_point \leftarrow \omega(b,\tau)$
\EndProcedure
\Function{bitmap\_filter}{r,s,$\tau$}
	\State $skip \leftarrow false$
	\If {$|r| \le cuto\mathit{ff}\_point$} 
		\State $upper\_bound = \dfrac{|r|+|s|-popcount(b_r \oplus b_s)}{2}$
		\If {$upper\_bound < \tau$} $skip = true$; \EndIf
	\EndIf
	\State \Return $skip$
\EndFunction
\end{algorithmic}
\vspace{0.1cm}
\end{algorithm}

\section{Implementation Details}
\label{sec:implementations}

The Bitmap Filter will be evaluated using CPU and GPU implementations, which are detailed in this section.

\subsection{CPU Implementation}
\label{sec:implementations:cpu}

In order to assess the Bitmap Filter in CPU, we added it to the state-of-the-art implementation of \cite{mann2016} using the available source code. Four algorithms were modified: AllPairs, PPJoin, GroupJoin, and AdaptJoin. These algorithms follow the Filter-Verification Framework (Algorithm \ref{algo:basic}) and, considering the peculiarities of each algorithm, the Bitmap Filter was introduced as \emph{filter}$_2$ (line \ref{algo:basic:filter_2}) or \emph{filter}$_3$ (line \ref{algo:basic:filter_3}). If the filter is applied in the candidate generation loop (\emph{filter}$_2$), the bitmap test may be applied more than once for the same candidate pair. If the filter is applied in the verification stage loop (\emph{filter}$_3$), the bitmap test will be applied only once for each unique candidate pair. 

For AllPairs, PPJoin, and GroupJoin we chose to insert the Bitmap Filter in the verification loop (\emph{filter}$_3$). In the specific case of the GroupJoin algorithm, the filter is applied after the grouped candidate pairs are expanded, in such way that the filter is applied to individual elements. 

For the AdaptJoin, we verified that the candidate list is relatively small at the verification loop, so the Bitmap Filter is applied at the candidate generation (\emph{filter}$_2$). The filter is executed only during the first $\ell$-prefix filter iterations (i.e. in the $1$-prefix schema computation).

The bitmaps were implemented with multiple of 64-bit words and the population count operation was done using the \texttt{\_\_builtin\_popcountll} gcc intrinsic function. Using proper compilation flags, the gcc converts this function to the POPCNT hardware instruction for population count~\cite{nehalem2008inside}. 

\subsection{GPU Implementation}
\label{sec:implementations:gpu}

The Bitmap Filter relies on bitwise instructions (such as \textit{xor} and population count) that are efficiently implemented in Graphical Processor Units (GPUs). Compute Unified Device Architecture (CUDA)\cite{CUDAGuide8.0} is the Nvidia general purpose framework that allows parallel execution of procedures (kernels) in Nvidia GPUs. In CUDA, each kernel is executed in parallel by several threads, which are grouped into independent blocks. Threads in the same block share data using a fast shared memory. The CUDA framework also allows the data transfer between host and GPU devices, in order to supply inputs to the kernel and send the output back to the host machine.

In order to show the potential of the Bitmap Filter in the CUDA architecture, we implemented the pseudocode of Algorithm \ref{algo:gpu} in GPU, where $R$ is the self-joined collection, $\tau_j$ is the Jaccard threshold and $cand$ is the return list containing a list of candidate pairs. This code is equivalent to the Na\"ive Algorithm (Algorithm \ref{algo:naive}) with the addition of the Length Filter (Section \ref{sec:filter:length}) and Bitmap Filter (Section \ref{sec:bitmap_filter}). The Bitmap-Xor generation method was used and the cutoff point was disabled. The kernel call (line \ref{algo:gpu:kernelcall}) executes $B$ blocks with $T$ threads each, and each thread receives a unique thread id $i$ (line \ref{algo:gpu:thread_id}). 

The kernel may be invoked many times until all sets are processed (line \ref{algo:gpu:max_i}). Each thread compares set $R[i]$ with all the previous sets $R[j_0\le j<i]$, where $j_0$ is the first index containing a set not filterable by the Length Filter (line \ref{algo:gpu:lengthfilter}). If the Bitmap Filter finds a possible candidate pair, this pair is included in a thread-local candidate list (lines \ref{algo:gpu:bitmap_filter}-\ref{algo:gpu:local_cand}). Each thread-local candidate list can hold up to 2048 candidates. If the thread-local list becomes saturated, all the remaining sets will be considered candidates and will be verified by the CPU.

At the end of the kernel execution (line \ref{algo:gpu:global_cand}), the thread-local candidate lists are concatenated in a global candidate list, without empty spaces. This operation, called stream compaction or reduction, reduces the amount of data transferred to the host. Finally, in the host code, all candidate pairs are verified and the similar pairs are included in the final result list (lines \ref{algo:gpu:verify0}-\ref{algo:gpu:return}).

\begin{algorithm}
\caption{GPU Algorithm \label{algo:gpu}}
\vspace{0.1cm}
\begin{algorithmic}[1]
\Procedure{GPUKernel}{$R$, $\tau_j$, $cand$}
\State $i \leftarrow thread\_id$ \label{algo:gpu:thread_id}
\If{$i > |R|$} return \EndIf \label{algo:gpu:max_i}
\State $j_0 \leftarrow \argmin_x \{|R[x]|\ge \lceil \tau_j\cdot|R[i]| \rceil\}$ \Comment{length filter} \label{algo:gpu:lengthfilter}
\For{$j \in [j_0..i)$}
\If {$bitmap\_filter(R[i],R[j],\tau_j)$} \label{algo:gpu:bitmap_filter}
\State $local\_cand \leftarrow local\_cand \cup (r,s)$ \label{algo:gpu:local_cand}
\EndIf
\EndFor
\State $cand \leftarrow block\_compact(local\_cand)$ \label{algo:gpu:global_cand}
\EndProcedure

\Procedure{HostCode}{$R$,$\tau_j$}
\State GPUKernel$\lll B,T\ggg(R,\tau_j, cand)$ \label{algo:gpu:kernelcall}
\For{$(r,s) \in cand$} \label{algo:gpu:verify0}
\If {$verify(r,s,\tau_j)$}
\State $pairs \leftarrow pairs \cup (r,s)$
\EndIf
\EndFor \label{algo:gpu:verify1}
\State \Return $pairs$ \label{algo:gpu:return}
\EndProcedure
\end{algorithmic}
\vspace{0.1cm}
\end{algorithm}

The GPU algorithm may produce up to  $|R|^2$ Bitmap Filter computations and its implementation was not intended for collections with more than a million elements. Nevertheless, it was implemented in order to show that even a simple algorithm can take advantage of the Bitmap Filter and become very competitive when compared to the state-of-the-art algorithms. So, we claim that more sophisticated GPU implementations may benefit more from Bitmap Filter, opening the opportunity of new improvements for the Set Similarity Join problem.

\section{Experimental Results}
\label{sec:experimental_results}

In order to create a baseline, we replicated the experiments conducted by \cite{mann2016}. Table \ref{tb:collections} presents the characteristics of the chosen collections. Comparing this table with \cite{mann2016}, slight variations can be seen in \myinput{dblp} collection, probably due to a different charset. Also, \myinput{flickr}, \myinput{netflix} and \myinput{spot} collections were no more available for download in the links provided by \cite{mann2016}. We generated new \myinput{uniform} and \myinput{zipf} collections with the same methodology \cite{mann2016}, using a Poisson distribution for the set sizes. All collections were preprocessed and the sets in the collections were sorted by size and, in case of a tie, they were sorted lexicographically by the token ids. We observed that the lexicographical ordering speeds up all the algorithms, as stated by \cite{mann2016}.

\begin{table}[t]
\centering
\caption{Collections used in experiments}
\setlength\tabcolsep{5.0pt}
\begin{tabular}{|c|r|r|r|r|r|}
\hline 
Collection& \multicolumn{1}{c|}{\# of} & \multicolumn{3}{c|}{set size} & \multicolumn{1}{c|}{\# uniq.} \\
\cline{3-5}
name & \multicolumn{1}{c|}{sets} & \multicolumn{1}{c|}{max} & \multicolumn{1}{c|}{avg} & \multicolumn{1}{c|}{med} &  \multicolumn{1}{c|}{tokens} \\ 
\hline 
\myinput{aol}& 	 	10154743& 	 245& 	  3.01&		3&	3873246	\\
\myinput{bms-pos}& 	320285& 	 164& 	  9.30&		7&	1657	\\
\myinput{dblp}& 		100000& 	 717& 	  106.28&	103&	3801	\\
\myinput{enron}& 	 	245615& 	 3162& 	  135.19&	86&	1113220	\\
\myinput{kosarak}& 	606770& 	 2498& 	  11.93&	5&	41275	\\
\myinput{livej}& 	 	3061271& 	 300& 	  36.44&	17&	7489073	\\
\myinput{orkut}& 	 	2732271& 	 40425&   119.67& 	29&	8730857	\\
\myinput{uniform}& 	100000& 	 25&	  9.99&	10&	220		\\
\myinput{zipf}& 	 	100000& 	 86& 	  49.99& 	50&	101584	\\
\hline 
\end{tabular} 
\label{tb:collections}
\end{table}

Figure \ref{fig:setsizes} shows the set size distribution in the collections. \myinput{Orkut}, \myinput{enron} and \myinput{kosarak} present a very large tail to their right, whereas \myinput{dblp}, \myinput{uniform} and \myinput{zipf} collections present a more symmetrical frequency distribution. Figure \ref{fig:tokcount} presents the number of occurrences of the tokens in some collections, where a Zipf distribution is often observed. This kind of distribution increases the efficiency of the Prefix Filter \cite{mann2016}.

\begin{figure}[t]
\includegraphics[width=8cm]{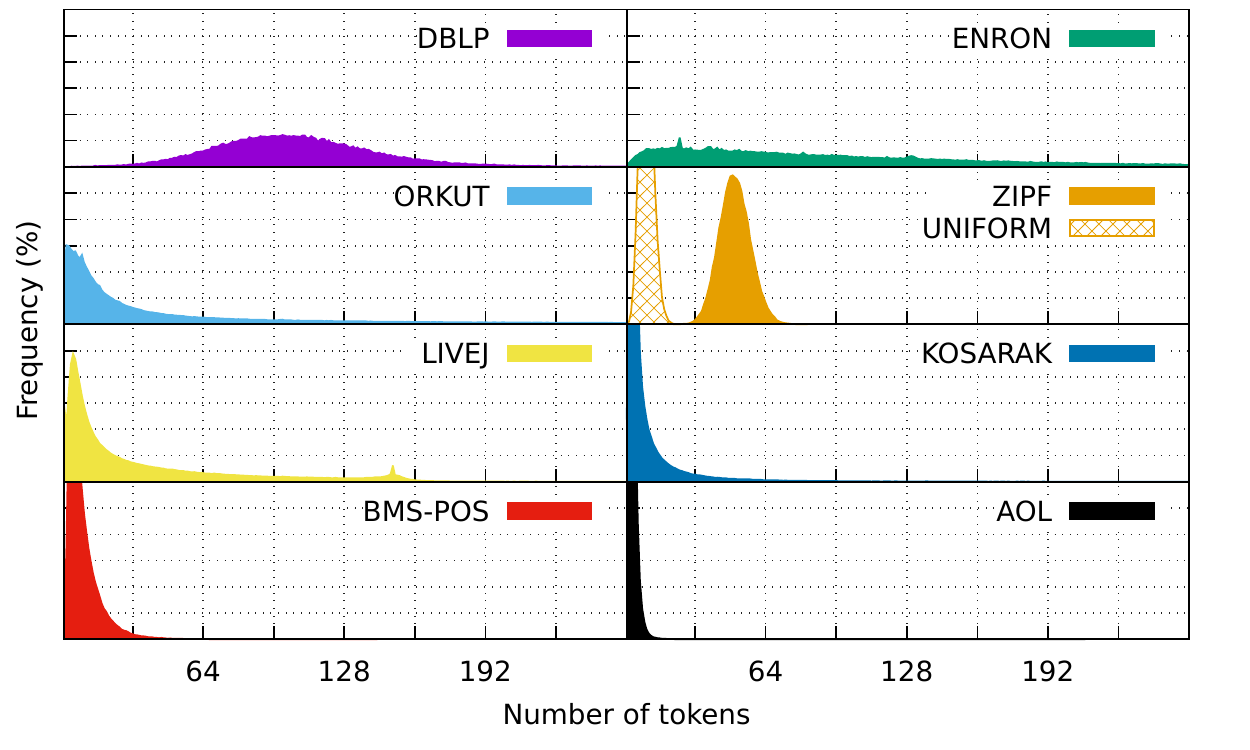}
\caption{Set size frequency\label{fig:setsizes}}
\end{figure}

\begin{figure}[t]
\includegraphics[width=8cm]{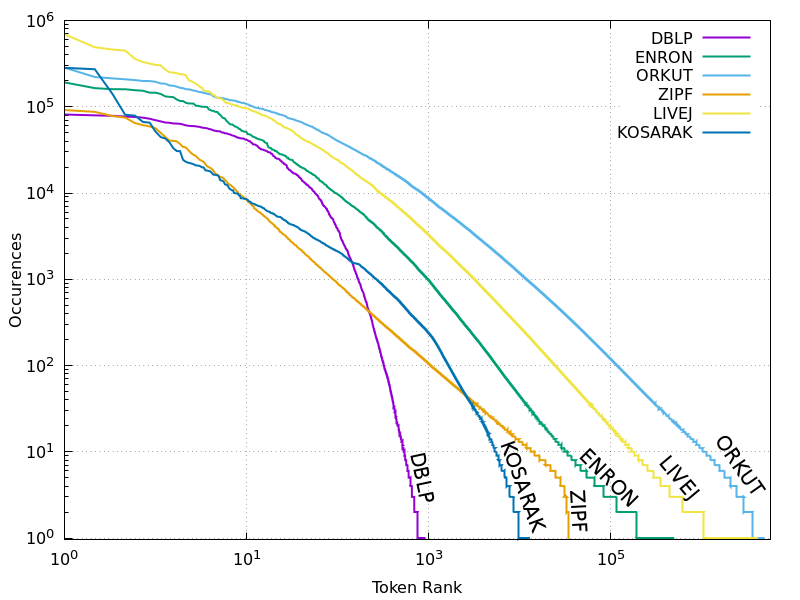}
\caption{Token occurrence histogram\label{fig:tokcount}}
\end{figure}

The experiments were conducted in a dedicated machine with an Intel i7-3770 CPU  (4 cores) at 3.40GHz with 8 GB RAM and an Nvidia GeForce GTX 980 Ti (2048 CUDA cores) with 6 GB RAM, running a CentOS Linux 7.2. The \texttt{gcc} compiler was used with the flags \texttt{-O3} and \texttt{-march=native}. As in \cite{mann2016}, all the presented runtimes were taken as average of 5 runs.

\subsection{CPU Experiments}

The four best state-of-the-art algorithms reported by \cite{mann2016} were selected for the experiments: AllPairs (\myalgo{ALL}), AdaptJoin (\myalgo{ADA}), GroupJoin (\myalgo{GRO}) and PPJoin (\myalgo{PPJ}). For the baseline experiments, we used the original source code provided by \cite{mann2016}. For the Bitmap Filter experiments, we used the modified source code proposed in Section \ref{sec:implementations}, with the Bitmap-Combined generation method (Algorithm \ref{algo:bitmap-combined}). The default bitmap size was $b=64$, but for the collections with large median set sizes (\myinput{dblp}, \myinput{enron} and \myinput{zipf}), the bitmap size was increased to $b=128$. The selected hash function was $h(t)=t \bmod b$.
In order to distinguish the modified algorithms that applied the Bitmap Filter, they will be referenced as: AllPairs-BF (\myalgo{ALL-BF}), AdaptJoin-BF (\myalgo{ADA-BF}), GroupJoin-BF (\myalgo{GRO-BF}) and PPJoin-BF (\myalgo{PPJ-BF}).

The experiments were conducted over 9 collections (self-joined) and 8 threshold $\tau$ varying from $0.50$ to $0.95$, resulting in 72 different input combinations. Each input was assessed by the two groups of algorithms: original algorithms (baseline) and the modified algorithms (with Bitmap Filter). In total, there are 288 experiments for each group of algorithms.

\subsubsection{CPU Execution times}

 Table \ref{tb:results_baseline} shows the running times of the original algorithms in columns ``orig.'' (baseline) and the running times of the modified algorithms in columns ``+BF'' (with Bitmap Filter). The times do not include the preprocessing overhead (loading files and sorting the input).


\begin{table*}[t]
\centering
\caption{CPU runtimes (in seconds) comparing the baseline execution (orig.) with the Bitmap Filter (BF) execution. Speedups are highlighted in \textbf{bold} and slowdowns in \textbf{\textcolor{gray}{gray}}. Inside each group with four algorithm executions, the best runtime is marked with asterisk (*).}
\label{tb:results_baseline}
\scriptsize
\setlength\tabcolsep{3.0pt}
\begin{tabular}{|@{\hspace{0.1cm}}c@{ }|p{0.5cm}||r@{ }|r@{ }||r@{ }|r@{ }||r@{ }|r@{ }||r@{ }|r@{ }||r@{ }|r@{ }||r@{ }|r@{ }||r@{ }|r@{ }||r@{ }|r@{ }|}
\cline{3-18}
\multicolumn{2}{c|}{}& \multicolumn{16}{c|}{Threshold $\tau_j$}\\
\cline{3-18}
\multicolumn{2}{c|}{} & \multicolumn{2}{c||}{0.5} & \multicolumn{2}{c||}{0.6} & \multicolumn{2}{c||}{0.7} & \multicolumn{2}{c||}{0.75} & \multicolumn{2}{c||}{0.8} & \multicolumn{2}{c||}{0.85} & \multicolumn{2}{c||}{0.9} & \multicolumn{2}{c|}{0.95} \arraybackslash \\
\cline{3-18}
\multicolumn{2}{c|}{} & \multicolumn{1}{c|}{orig.} & \multicolumn{1}{c||}{+BF} & \multicolumn{1}{c|}{orig.} & \multicolumn{1}{c||}{+BF} & \multicolumn{1}{c|}{orig.} & \multicolumn{1}{c||}{+BF} & \multicolumn{1}{c|}{orig.} & \multicolumn{1}{c||}{+BF} &\multicolumn{1}{c|}{orig.} & \multicolumn{1}{c||}{+BF} & \multicolumn{1}{c|}{orig.} & \multicolumn{1}{c||}{+BF} & \multicolumn{1}{c|}{orig.} & \multicolumn{1}{c||}{+BF} & \multicolumn{1}{c|}{orig.} & \multicolumn{1}{c|}{+BF} \\
\hline

\multirow{4}{*}{\rotatebox[origin=c]{90}{\myinput{aol}}}
&\multicolumn{1}{c||}{ADA}&{437.3}&*\textbf{113.1}&{119.3}&*\textbf{31.6}&{24.15}&\textbf{7.74}&{20.27}&\textbf{6.50}&{12.22}&\textbf{4.80}&{6.40}&\textbf{3.80}&{5.84}&\textbf{3.63}&{5.81}&\textbf{3.60}\\
&\multicolumn{1}{c||}{ALL}&{296.1}&\textbf{183.4}&{75.9}&\textbf{46.1}&{9.81}&*\textbf{6.49}&{6.30}&*\textbf{4.60}&{3.16}&*\textbf{2.40}&{1.54}&\textbf{1.28}&{1.34}&\textbf{1.16}&{1.33}&\textbf{1.15}\\
&\multicolumn{1}{c||}{GRO}&{219.2}&\textbf{207.3}&*{54.7}&\textbf{51.1}&*{8.89}&\textbf{7.90}&*{6.16}&\textbf{5.40}&*{2.82}&\textbf{2.41}&*{1.10}&*\textbf{0.93}&*{0.86}&*\textbf{0.74}&*{0.84}&*\textbf{0.72}\\
&\multicolumn{1}{c||}{PPJ}&*{212.2}&\textbf{166.7}&{57.1}&\textbf{44.6}&{10.72}&\textbf{8.20}&{8.07}&\textbf{6.19}&{4.27}&\textbf{3.28}&{2.19}&\textbf{1.71}&{1.89}&\textbf{1.54}&{1.87}&\textbf{1.53}\\
\hline
\multirow{4}{*}{\rotatebox[origin=c]{90}{\myinput{bms-pos}}}
&\multicolumn{1}{c||}{ADA}&*{24.49}&*\textbf{8.79}&*{8.47}&*\textbf{2.96}&*{2.61}&*\textbf{0.98}&{1.63}&*\textbf{0.61}&{0.88}&*\textbf{0.34}&{0.42}&*\textbf{0.18}&{0.26}&\textbf{0.12}&{0.20}&\textbf{0.10}\\
&\multicolumn{1}{c||}{ALL}&{36.63}&\textbf{15.72}&{10.63}&\textbf{5.75}&{3.04}&\textbf{1.75}&{1.69}&\textbf{1.03}&{0.80}&\textbf{0.50}&{0.29}&\textbf{0.20}&{0.13}&\textbf{0.09}&{0.08}&\textbf{0.06}\\
&\multicolumn{1}{c||}{GRO}&{36.11}&\textbf{33.04}&{10.39}&\textbf{9.32}&{2.88}&\textbf{2.41}&*{1.62}&\textbf{1.32}&*{0.77}&\textbf{0.61}&*{0.28}&\textbf{0.22}&*{0.10}&*\textbf{0.08}&*{0.04}&*\textbf{0.03}\\
&\multicolumn{1}{c||}{PPJ}&{31.99}&\textbf{19.86}&{9.97}&\textbf{6.70}&{3.06}&\textbf{2.15}&{1.81}&\textbf{1.29}&{0.93}&\textbf{0.66}&{0.38}&\textbf{0.28}&{0.18}&\textbf{0.13}&{0.12}&\textbf{0.09}\\
\hline
\multirow{4}{*}{\rotatebox[origin=c]{90}{\myinput{dblp}}}
&\multicolumn{1}{c||}{ADA}&*{161.6}&*\textbf{\textcolor{gray}{172.7}}&*{75.3}&*\textbf{64.6}&*{31.01}&*\textbf{14.20}&*{17.47}&*\textbf{7.06}&*{8.21}&*\textbf{3.37}&*{3.23}&*\textbf{1.37}&*{0.96}&*\textbf{0.42}&{0.23}&\textbf{0.10}\\
&\multicolumn{1}{c||}{ALL}&{177.1}&\textbf{\textcolor{gray}{190.7}}&{83.4}&\textbf{74.4}&{37.45}&\textbf{27.00}&{23.17}&\textbf{15.34}&{11.64}&\textbf{7.35}&{4.60}&\textbf{2.73}&{1.24}&\textbf{0.71}&*{0.16}&*\textbf{0.10}\\
&\multicolumn{1}{c||}{GRO}&{216.0}&\textbf{\textcolor{gray}{228.4}}&{107.7}&\textbf{103.0}&{39.72}&\textbf{36.72}&{22.10}&\textbf{20.24}&{10.42}&\textbf{9.48}&{4.01}&\textbf{3.74}&{1.12}&\textbf{1.05}&{0.17}&\textbf{0.15}\\
&\multicolumn{1}{c||}{PPJ}&{199.3}&\textbf{\textcolor{gray}{206.7}}&{102.9}&\textbf{93.2}&{38.86}&\textbf{31.50}&{21.72}&\textbf{16.94}&{10.21}&\textbf{7.91}&{4.00}&\textbf{3.00}&{1.11}&\textbf{0.82}&{0.17}&\textbf{0.12}\\
\hline
\multirow{4}{*}{\rotatebox[origin=c]{90}{\myinput{enron}}}
&\multicolumn{1}{c||}{ADA}&*{30.35}&*\textbf{\textcolor{gray}{31.47}}&{11.05}&\textbf{10.41}&{4.46}&\textbf{3.66}&{2.93}&\textbf{2.21}&{1.91}&\textbf{1.36}&{1.18}&\textbf{0.86}&{0.63}&\textbf{0.49}&{0.32}&\textbf{0.30}\\
&\multicolumn{1}{c||}{ALL}&{43.51}&\textbf{\textcolor{gray}{45.27}}&{13.02}&\textbf{12.19}&{3.69}&\textbf{3.05}&{1.95}&\textbf{1.56}&{1.06}&*\textbf{0.88}&{0.59}&*\textbf{0.51}&*{0.26}&*\textbf{0.24}&*{0.12}&*\textbf{0.12}\\
&\multicolumn{1}{c||}{GRO}&{32.96}&\textbf{\textcolor{gray}{33.37}}&*{9.90}&\textbf{9.70}&*{2.93}&\textbf{2.79}&*{1.69}&\textbf{1.58}&*{0.98}&\textbf{0.93}&*{0.58}&\textbf{0.55}&{0.27}&\textbf{0.25}&{0.13}&\textbf{0.13}\\
&\multicolumn{1}{c||}{PPJ}&{33.25}&\textbf{32.72}&{10.35}&*\textbf{9.41}&{3.07}&*\textbf{2.72}&{1.77}&*\textbf{1.54}&{1.03}&\textbf{0.90}&{0.59}&\textbf{0.53}&{0.27}&\textbf{0.25}&{0.12}&\textbf{0.12}\\
\hline
\multirow{4}{*}{\rotatebox[origin=c]{90}{\myinput{kosarak}}}
&\multicolumn{1}{c||}{ADA}&{93.83}&*\textbf{27.53}&{13.59}&*\textbf{4.12}&{1.53}&\textbf{0.78}&{1.03}&\textbf{0.56}&{0.71}&\textbf{0.41}&{0.47}&\textbf{0.31}&{0.36}&\textbf{0.25}&{0.28}&\textbf{0.22}\\
&\multicolumn{1}{c||}{ALL}&{58.50}&\textbf{35.59}&{8.77}&\textbf{5.43}&{1.04}&*\textbf{0.67}&*{0.59}&*\textbf{0.41}&*{0.32}&*\textbf{0.23}&*{0.15}&*\textbf{0.12}&{0.09}&*\textbf{0.08}&{0.06}&\textbf{0.06}\\
&\multicolumn{1}{c||}{GRO}&*{37.54}&\textbf{35.81}&*{5.82}&\textbf{5.50}&*{1.00}&\textbf{0.90}&{0.61}&\textbf{0.54}&{0.33}&\textbf{0.29}&{0.16}&\textbf{0.14}&*{0.09}&\textbf{0.08}&*{0.06}&*\textbf{0.05}\\
&\multicolumn{1}{c||}{PPJ}&{39.90}&\textbf{34.95}&{6.49}&\textbf{5.31}&{1.08}&\textbf{0.81}&{0.68}&\textbf{0.51}&{0.39}&\textbf{0.30}&{0.19}&\textbf{0.15}&{0.12}&\textbf{0.10}&{0.09}&\textbf{0.07}\\
\hline
\multirow{4}{*}{\rotatebox[origin=c]{90}{\myinput{livej}}}
&\multicolumn{1}{c||}{ADA}&*{219.4}&*\textbf{192.3}&{64.59}&*\textbf{56.16}&{21.39}&\textbf{17.26}&{14.10}&\textbf{10.77}&{9.46}&\textbf{6.81}&{6.14}&\textbf{4.35}&{3.90}&\textbf{3.18}&{2.86}&\textbf{2.53}\\
&\multicolumn{1}{c||}{ALL}&{365.8}&\textbf{312.6}&{94.86}&\textbf{79.57}&{21.01}&\textbf{16.75}&{10.08}&\textbf{7.90}&{4.83}&\textbf{3.88}&{2.30}&*\textbf{1.92}&*{1.18}&*\textbf{1.08}&*{0.64}&*\textbf{0.62}\\
&\multicolumn{1}{c||}{GRO}&{260.8}&\textbf{\textcolor{gray}{263.3}}&{66.41}&\textbf{66.35}&{16.07}&\textbf{15.91}&{8.40}&\textbf{8.21}&*{4.33}&\textbf{4.21}&*{2.25}&\textbf{2.16}&{1.24}&\textbf{1.21}&{0.72}&\textbf{0.70}\\
&\multicolumn{1}{c||}{PPJ}&{245.8}&\textbf{225.7}&*{64.17}&\textbf{57.97}&*{15.86}&*\textbf{13.91}&*{8.38}&*\textbf{7.30}&{4.38}&*\textbf{3.85}&{2.30}&\textbf{2.04}&{1.23}&\textbf{1.15}&{0.70}&\textbf{0.66}\\
\hline
\multirow{4}{*}{\rotatebox[origin=c]{90}{\myinput{orkut}}}
&\multicolumn{1}{c||}{ADA}&{162.1}&\textbf{\textcolor{gray}{178.9}}&{79.96}&\textbf{\textcolor{gray}{81.73}}&{41.89}&\textbf{40.92}&{29.80}&\textbf{28.72}&{19.71}&\textbf{18.88}&{11.91}&\textbf{11.46}&{7.16}&\textbf{6.87}&{4.27}&\textbf{4.14}\\
&\multicolumn{1}{c||}{ALL}&{220.2}&\textbf{\textcolor{gray}{238.6}}&{70.55}&\textbf{\textcolor{gray}{74.57}}&{25.10}&\textbf{\textcolor{gray}{25.65}}&{15.42}&\textbf{\textcolor{gray}{15.71}}&{9.10}&\textbf{\textcolor{gray}{9.38}}&{5.12}&\textbf{\textcolor{gray}{5.17}}&*{2.81}&*\textbf{\textcolor{gray}{2.84}}&*{1.35}&*\textbf{\textcolor{gray}{1.36}}\\
&\multicolumn{1}{c||}{GRO}&{155.2}&\textbf{\textcolor{gray}{160.7}}&{56.43}&\textbf{\textcolor{gray}{57.34}}&*{22.09}&\textbf{\textcolor{gray}{22.53}}&*{14.00}&\textbf{\textcolor{gray}{14.02}}&{8.71}&\textbf{8.62}&{5.18}&\textbf{5.17}&{3.03}&\textbf{\textcolor{gray}{3.11}}&{1.52}&\textbf{1.51}\\
&\multicolumn{1}{c||}{PPJ}&*{150.1}&*\textbf{\textcolor{gray}{155.4}}&*{54.34}&*\textbf{\textcolor{gray}{55.25}}&{22.14}&*\textbf{21.96}&{14.15}&*\textbf{13.91}&*{8.68}&*\textbf{8.52}&*{5.10}&*\textbf{\textcolor{gray}{5.11}}&{2.93}&\textbf{2.89}&{1.42}&\textbf{1.40}\\
\hline
\multirow{4}{*}{\rotatebox[origin=c]{90}{\myinput{uniform}}}
&\multicolumn{1}{c||}{ADA}&*{56.58}&*\textbf{15.85}&{29.19}&*\textbf{6.80}&{11.03}&*\textbf{2.45}&{7.21}&*\textbf{1.62}&{3.93}&\textbf{0.89}&{1.08}&\textbf{0.28}&{0.54}&\textbf{0.16}&{0.17}&\textbf{0.06}\\
&\multicolumn{1}{c||}{ALL}&{58.78}&\textbf{27.33}&{25.36}&\textbf{14.05}&{8.44}&\textbf{5.30}&{5.38}&\textbf{3.55}&{2.81}&\textbf{1.92}&{0.86}&\textbf{0.58}&{0.44}&\textbf{0.31}&{0.15}&\textbf{0.12}\\
&\multicolumn{1}{c||}{GRO}&{57.30}&\textbf{38.75}&*{19.49}&\textbf{12.39}&*{5.48}&\textbf{3.43}&*{2.95}&\textbf{1.77}&*{1.29}&*\textbf{0.72}&*{0.44}&*\textbf{0.23}&*{0.22}&*\textbf{0.12}&*{0.06}&*\textbf{0.03}\\
&\multicolumn{1}{c||}{PPJ}&{61.14}&\textbf{33.92}&{25.35}&\textbf{15.93}&{8.74}&\textbf{6.14}&{5.59}&\textbf{4.15}&{3.00}&\textbf{2.26}&{1.16}&\textbf{0.86}&{0.65}&\textbf{0.49}&{0.21}&\textbf{0.16}\\
\hline
\multirow{4}{*}{\rotatebox[origin=c]{90}{\myinput{zipf}}}
&\multicolumn{1}{c||}{ADA}&*{1.09}&*\textbf{0.74}&*{0.61}&\textbf{0.45}&{0.40}&\textbf{0.32}&{0.34}&\textbf{0.28}&{0.29}&\textbf{0.23}&{0.23}&\textbf{0.19}&{0.18}&\textbf{0.15}&{0.13}&\textbf{0.12}\\
&\multicolumn{1}{c||}{ALL}&{1.97}&\textbf{0.85}&{0.79}&*\textbf{0.38}&{0.36}&*\textbf{0.20}&{0.25}&*\textbf{0.15}&{0.17}&*\textbf{0.11}&*{0.09}&*\textbf{0.07}&*{0.05}&*\textbf{0.04}&*{0.02}&*\textbf{0.02}\\
&\multicolumn{1}{c||}{GRO}&{2.15}&\textbf{\textcolor{gray}{2.19}}&{0.86}&\textbf{0.85}&{0.39}&\textbf{0.39}&{0.27}&\textbf{0.27}&{0.18}&\textbf{0.17}&{0.11}&\textbf{0.11}&{0.06}&\textbf{0.06}&{0.03}&\textbf{0.03}\\
&\multicolumn{1}{c||}{PPJ}&{1.68}&\textbf{1.06}&{0.70}&\textbf{0.48}&*{0.33}&\textbf{0.26}&*{0.23}&\textbf{0.19}&*{0.16}&\textbf{0.13}&{0.10}&\textbf{0.08}&{0.05}&\textbf{0.05}&{0.03}&\textbf{0.03}\\
\hline
\end{tabular} 
\vspace{0.1cm}
\end{table*}

For each of the 72 different inputs in Table \ref{tb:results_baseline}, there is an asterisk (*) in the algorithm that achieved the best runtime, regarding each group of 4 algorithms (separated for the original and modified algorithm groups). Considering only the original algorithms, it can be seen that \myalgo{GRO} achieved the best runtimes for 33 out of 72 inputs (46\%), followed by \myalgo{ADA} with 15 (21\%), \myalgo{ALL} with 13 (18\%) and \myalgo{PPJ} with 11 (15\%). Considering only the modified algorithms, \myalgo{ADA-BF} and \myalgo{ALL-BF} were the best in 25 out of 72 inputs (35\% each), followed by \myalgo{PPJ-BF} with 12 (17\%) and \myalgo{GRO-BF} with 10 (14\%). Considering both groups together, \myalgo{ADA-BF} and \myalgo{ALL-BF} obtained the best runtime in 23 out of 72 inputs (32\% each), followed by \myalgo{GRO-BF} with 10 (14\%), \myalgo{PPJ-BF} with 9 (13\%), \myalgo{PPJ} with 3 (4\%), \myalgo{ALL} and \myalgo{ADA} with 2 (3\% each). In total, 90\% of the inputs presented best runtime when using algorithms with Bitmap Filter. 

The runtime sum of all 72 experiments for each original algorithm was \myalgo{PPJ}:1535s, \myalgo{GRO}:1561s, \myalgo{ALL}:1878s, \myalgo{ADA}:1944s. Considering the modified algorithms with Bitmap Filter, the runtime sums and the related reduction when compared to the original algorithms were: \myalgo{ADA-BF}:1233s (37\%), \linebreak \myalgo{PPJ-BF}:1359s (11\%), \myalgo{GRO-BF}:1515s (3\%) and \linebreak \myalgo{ALL-BF}:1549s (18\%).

Table \ref{tb:improvement_bitmap64} presents the runtime improvement of the Bitmap Filter considering the formula $(t_{a}/t_{b}-1)\times 100\%$, where $t_{b}$ is the runtime with Bitmap Filter and $t_{a}$ the original baseline runtime. The Bitmap Filter was able to improve the running times in $43$\% on average, although in some situations it improved up to $350$\%. Some experiments presented a small slowdown (negative values) in the computation, but the maximum slowdown was not greater than $9$\%.


\begin{table}
\centering
\caption{Runtime improvement using Bitmap Filter}
\scriptsize
\setlength\tabcolsep{3.0pt}
\begin{tabular}{|@{\hspace{0.1cm}}c@{ }|r||r|r|r|r|r|r|r|r|}
\cline{3-10}
\multicolumn{2}{c|}{} & \multicolumn{8}{c|}{Threshold $\tau_j$}\\
\cline{3-10}
\multicolumn{2}{c|}{} & \parbox{0.66cm}{\centering 0.5} & \parbox{0.66cm}{\centering 0.6} & \parbox{0.66cm}{\centering 0.7} & \parbox{0.66cm}{\centering 0.75} & \parbox{0.66cm}{\centering 0.8} & \parbox{0.66cm}{\centering 0.85} & \parbox{0.66cm}{\centering 0.9} & \parbox{0.66cm}{\centering 0.95} \\
\hline
\multirow{4}{*}{\rotatebox[origin=c]{90}{\myinput{aol}}}
&ADA&{287\%}&{278\%}&{212\%}&{212\%}&{154\%}&{69\%}&{61\%}&{62\%}\\
&ALL&{61\%}&{64\%}&{51\%}&{37\%}&{32\%}&{20\%}&{15\%}&{16\%}\\
&GRO&{6\%}&{7\%}&{13\%}&{14\%}&{17\%}&{18\%}&{18\%}&{18\%}\\
&PPJ&{27\%}&{28\%}&{31\%}&{30\%}&{30\%}&{28\%}&{22\%}&{23\%}\\
\hline
\multirow{4}{*}{\rotatebox[origin=c]{90}{\myinput{bms-pos}}}
&ADA&{179\%}&{186\%}&{166\%}&{166\%}&{155\%}&{130\%}&{108\%}&{94\%}\\
&ALL&{133\%}&{85\%}&{73\%}&{65\%}&{59\%}&{50\%}&{36\%}&{27\%}\\
&GRO&{9\%}&{11\%}&{19\%}&{22\%}&{25\%}&{26\%}&{30\%}&{39\%}\\
&PPJ&{61\%}&{49\%}&{43\%}&{40\%}&{40\%}&{37\%}&{35\%}&{30\%}\\
\hline
\multirow{4}{*}{\rotatebox[origin=c]{90}{\myinput{dblp}}}
&ADA&\textcolor{gray}{-6\%}&{16\%}&{118\%}&{148\%}&{144\%}&{137\%}&{125\%}&{121\%}\\
&ALL&\textcolor{gray}{-7\%}&{12\%}&{39\%}&{51\%}&{58\%}&{69\%}&{75\%}&{63\%}\\
&GRO&\textcolor{gray}{-5\%}&{5\%}&{8\%}&{9\%}&{10\%}&{7\%}&{8\%}&{11\%}\\
&PPJ&\textcolor{gray}{-4\%}&{10\%}&{23\%}&{28\%}&{29\%}&{34\%}&{36\%}&{39\%}\\
\hline
\multirow{4}{*}{\rotatebox[origin=c]{90}{\myinput{enron}}}
&ADA&\textcolor{gray}{-4\%}&{6\%}&{22\%}&{33\%}&{41\%}&{38\%}&{29\%}&{7\%}\\
&ALL&\textcolor{gray}{-4\%}&{7\%}&{21\%}&{25\%}&{21\%}&{16\%}&{8\%}&{1\%}\\
&GRO&\textcolor{gray}{-1\%}&{2\%}&{5\%}&{6\%}&{5\%}&{7\%}&{7\%}&{2\%}\\
&PPJ&{2\%}&{10\%}&{13\%}&{15\%}&{14\%}&{12\%}&{6\%}&{6\%}\\
\hline
\multirow{4}{*}{\rotatebox[origin=c]{90}{\myinput{kosarak}}}
&ADA&{241\%}&{230\%}&{98\%}&{84\%}&{73\%}&{54\%}&{42\%}&{30\%}\\
&ALL&{64\%}&{62\%}&{54\%}&{44\%}&{38\%}&{26\%}&{19\%}&{11\%}\\
&GRO&{5\%}&{6\%}&{11\%}&{12\%}&{11\%}&{7\%}&{12\%}&{12\%}\\
&PPJ&{14\%}&{22\%}&{34\%}&{32\%}&{29\%}&{24\%}&{19\%}&{19\%}\\
\hline
\multirow{4}{*}{\rotatebox[origin=c]{90}{\myinput{livej}}}
&ADA&{14\%}&{15\%}&{24\%}&{31\%}&{39\%}&{41\%}&{23\%}&{13\%}\\
&ALL&{17\%}&{19\%}&{26\%}&{28\%}&{24\%}&{20\%}&{9\%}&{3\%}\\
&GRO&\textcolor{gray}{-1\%}&{0\%}&{1\%}&{2\%}&{3\%}&{4\%}&{2\%}&{3\%}\\
&PPJ&{9\%}&{11\%}&{14\%}&{15\%}&{14\%}&{13\%}&{8\%}&{7\%}\\
\hline
\multirow{4}{*}{\rotatebox[origin=c]{90}{\myinput{orkut}}}
&ADA&\textcolor{gray}{-9\%}&\textcolor{gray}{-2\%}&{2\%}&{4\%}&{4\%}&{4\%}&{4\%}&{3\%}\\
&ALL&\textcolor{gray}{-8\%}&\textcolor{gray}{-5\%}&\textcolor{gray}{-2\%}&\textcolor{gray}{-2\%}&\textcolor{gray}{-3\%}&\textcolor{gray}{-1\%}&\textcolor{gray}{-1\%}&\textcolor{gray}{-1\%}\\
&GRO&\textcolor{gray}{-3\%}&\textcolor{gray}{-2\%}&\textcolor{gray}{-2\%}&\textcolor{gray}{0\%}&{1\%}&{0\%}&\textcolor{gray}{-2\%}&{1\%}\\
&PPJ&\textcolor{gray}{-4\%}&\textcolor{gray}{-2\%}&{1\%}&{2\%}&{2\%}&{0\%}&{2\%}&{2\%}\\
\hline
\multirow{4}{*}{\rotatebox[origin=c]{90}{\myinput{uniform}}}
&ADA&{257\%}&{329\%}&{350\%}&{344\%}&{342\%}&{282\%}&{231\%}&{170\%}\\
&ALL&{115\%}&{81\%}&{59\%}&{52\%}&{47\%}&{47\%}&{41\%}&{28\%}\\
&GRO&{48\%}&{57\%}&{60\%}&{67\%}&{78\%}&{92\%}&{81\%}&{80\%}\\
&PPJ&{80\%}&{59\%}&{42\%}&{35\%}&{33\%}&{35\%}&{34\%}&{30\%}\\
\hline
\multirow{4}{*}{\rotatebox[origin=c]{90}{\myinput{zipf}}}
&ADA&{48\%}&{35\%}&{24\%}&{25\%}&{24\%}&{24\%}&{20\%}&{11\%}\\
&ALL&{130\%}&{108\%}&{78\%}&{69\%}&{54\%}&{30\%}&{19\%}&{0\%}\\
&GRO&\textcolor{gray}{-2\%}&{1\%}&{2\%}&{0\%}&{6\%}&{2\%}&{4\%}&{0\%}\\
&PPJ&{59\%}&{45\%}&{30\%}&{21\%}&{19\%}&{16\%}&{8\%}&{0\%}\\
\hline
\end{tabular} 
\label{tb:improvement_bitmap64}
\end{table}

Table \ref{tb:improvement_algorithms} shows the runtime improvement of the algorithms with respect to each threshold, considering the runtime average of all collections. All the algorithms presented improvements: \myalgo{ADA-BF} was the one with highest gain, varying from 57\% up to 121\%; \myalgo{ALL-BF} presented improvements ranging from 17\% to 56\%; \myalgo{PPJ-BF} showed improvements from 17\% to 27\%; \myalgo{GRO-BF} was the one with lowest gains, varying from 6\% up to 18\%. 

\begin{table}
\centering
\caption{Average Bitmap Filter improvement per algorithm}
\scriptsize
\setlength\tabcolsep{3.0pt}
\begin{tabular}{|@{\hspace{0.1cm}}c@{ }|r||r|r|r|r|r|r|r|r|}
\cline{3-10}
\multicolumn{2}{c|}{} & \multicolumn{8}{c|}{Threshold $\tau_j$}\\
\cline{3-10}
\multicolumn{2}{c|}{} & \parbox{0.65cm}{\centering 0.5} & \parbox{0.65cm}{\centering 0.6} & \parbox{0.65cm}{\centering 0.7} & \parbox{0.65cm}{\centering 0.75} & \parbox{0.65cm}{\centering 0.8} & \parbox{0.65cm}{\centering 0.85} & \parbox{0.65cm}{\centering 0.9} & \parbox{0.65cm}{\centering 0.95} \\
\hline
\multirow{4}{*}{\rotatebox[origin=c]{90}{Average}}
&ADA&{112\%}&{121\%}&{113\%}&{116\%}&{109\%}&{86\%}&{72\%}&{57\%}\\
&ALL&{56\%}&{48\%}&{44\%}&{41\%}&{37\%}&{31\%}&{25\%}&{17\%}\\
&GRO&{6\%}&{10\%}&{13\%}&{15\%}&{17\%}&{18\%}&{18\%}&{18\%}\\
&PPJ&{27\%}&{26\%}&{26\%}&{24\%}&{23\%}&{22\%}&{19\%}&{17\%}\\
\hline
\end{tabular} 
\label{tb:improvement_algorithms}
\end{table}

Table \ref{tb:improvement_collections} presents the average improvement for each collection and threshold, considering the average runtimes between the algorithms. A very good improvement is noted for collections \myinput{uniform}, \myinput{bms-pos}, \myinput{aol}, \myinput{dblp} and \myinput{kosarak}, followed by intermediate gains in \myinput{livej} and \myinput{zip}. The smallest improvements were observed in collections \myinput{enron} and \myinput{orkut}. Small slowdowns of up to 6\% were detected in collections \myinput{dblp}, \myinput{enron} and \myinput{orkut} at low thresholds (0.5 and 0.6). The collection gain is related to the set size distribution (Figure~\ref{fig:setsizes}) and the number of distinct tokens (Table \ref{tb:collections}), such that collections with many small sets and few unique tokens (e.g. \myinput{uniform} and \myinput{bms-pos}) present higher improvements then collections with many large sets and many unique tokens (e.g. \myinput{enron} and \myinput{orkut}).


\begin{table}
\caption{Average Bitmap Filter improvement per collection}
\label{tb:improvement_collections}
\scriptsize
\setlength\tabcolsep{3.0pt}
\begin{tabular}{|@{\hspace{0.1cm}}p{1.1cm}@{ }||r|r|r|r|r|r|r|r|}
\cline{2-9}
\multicolumn{1}{c|}{} & \multicolumn{8}{c|}{Threshold $\tau_j$}\\
\cline{2-9}
\multicolumn{1}{c|}{} & \parbox{0.65cm}{\centering 0.5} & \parbox{0.65cm}{\centering 0.6} & \parbox{0.65cm}{\centering 0.7} & \parbox{0.65cm}{\centering 0.75} & \parbox{0.65cm}{\centering 0.8} & \parbox{0.65cm}{\centering 0.85} & \parbox{0.65cm}{\centering 0.9} & \parbox{0.65cm}{\centering 0.95} \\
\hline
\myinput{aol}&{95\%}&{94\%}&{77\%}&{73\%}&{58\%}&{34\%}&{29\%}&{29\%}\\
\myinput{bms-pos}&{96\%}&{83\%}&{75\%}&{73\%}&{70\%}&{61\%}&{53\%}&{48\%}\\
\myinput{dblp}&\textcolor{gray}{-6\%}&{11\%}&{47\%}&{59\%}&{60\%}&{61\%}&{61\%}&{58\%}\\
\myinput{enron}&\textcolor{gray}{-2\%}&{6\%}&{15\%}&{20\%}&{20\%}&{18\%}&{13\%}&{4\%}\\
\myinput{kosarak}&{81\%}&{80\%}&{49\%}&{43\%}&{38\%}&{28\%}&{23\%}&{18\%}\\
\myinput{livej}&{10\%}&{11\%}&{16\%}&{19\%}&{20\%}&{19\%}&{10\%}&{6\%}\\
\myinput{orkut}&\textcolor{gray}{-6\%}&\textcolor{gray}{-3\%}&{0\%}&{1\%}&{1\%}&{1\%}&{1\%}&{1\%}\\
\myinput{uniform}&{125\%}&{132\%}&{128\%}&{124\%}&{125\%}&{114\%}&{97\%}&{77\%}\\
\myinput{zipf}&{59\%}&{47\%}&{33\%}&{29\%}&{26\%}&{18\%}&{13\%}&{3\%}\\
\hline
\end{tabular} 
\end{table}

\subsubsection{Filtering Ratio}

In order to assess the efficiency of the Bitmap Filter, we collected the number of candidate pairs that were filtered out by the Bitmap Filter. Table \ref{tb:filter_ratio} presents the filtering ratio, defined by the number of filtered pairs divided by the total number of candidates. The filtered ratio is strongly related to the runtime improvement (Table \ref{tb:improvement_bitmap64}).

\begin{table}
\centering
\caption{Bitmap Filter ratio per collection (AllPairs Algorithm)}
\label{tb:filter_ratio}
\scriptsize
\setlength\tabcolsep{3.0pt}
\begin{tabular}{|@{\hspace{0.1cm}}p{1.1cm}@{ }||r|r|r|r|r|r|r|r|}
\cline{2-9}
\multicolumn{1}{c|}{} & \multicolumn{8}{c|}{Threshold $\tau_j$}\\
\cline{2-9}
\multicolumn{1}{c|}{} & \parbox{0.65cm}{\centering 0.5} & \parbox{0.65cm}{\centering 0.6} & \parbox{0.65cm}{\centering 0.7} & \parbox{0.65cm}{\centering 0.75} & \parbox{0.65cm}{\centering 0.8} & \parbox{0.65cm}{\centering 0.85} & \parbox{0.65cm}{\centering 0.9} & \parbox{0.65cm}{\centering 0.95} \\
\hline
\myinput{aol}&{97\%}&{98\%}&{99\%}&{99\%}&{99\%}&{99\%}&{99\%}&{99\%}\\
\myinput{bms-pos}&{98\%}&{99\%}&{99\%}&{99\%}&{99\%}&{99\%}&{99\%}&{99\%}\\
\myinput{dblp}&{11\%}&{54\%}&{97\%}&{99\%}&{99\%}&{99\%}&{99\%}&{99\%}\\
\myinput{enron}&{14\%}&{29\%}&{56\%}&{71\%}&{83\%}&{86\%}&{89\%}&{85\%}\\
\myinput{kosarak}&{86\%}&{90\%}&{93\%}&{95\%}&{96\%}&{98\%}&{99\%}&{99\%}\\
\myinput{livej}&{45\%}&{52\%}&{64\%}&{74\%}&{86\%}&{99\%}&{99\%}&{99\%}\\
\myinput{orkut}&{4\%}&{5\%}&{8\%}&{10\%}&{13\%}&{18\%}&{27\%}&{54\%}\\
\myinput{uniform}&{99\%}&{99\%}&{99\%}&{99\%}&{99\%}&{99\%}&{99\%}&{99\%}\\
\myinput{zipf}&{99\%}&{99\%}&{100\%}&{100\%}&{100\%}&{100\%}&{100\%}&{100\%}\\
\hline
\end{tabular} 
\end{table}

Figure \ref{fig:filter_ratio:bitmap_types} presents the filtering ratio for the three bitmap creation algorithms, using bitmap sizes with $b=64$ bits and without the cutoff point (Section \ref{sec:bitmap:cutoff}). In the figure, the Bitmap-Xor was consistently the best option for the three analyzed collections, followed by Bitmap-Set and Bitmap-Next. The only situation where the Bitmap-Set was the best option was for the \myinput{enron} collection with Jaccard threshold 0.5. As stated in Section \ref{sec:bitmap:cutoff}, Bitmap-Set is slightly better around $\tau_j=0.5$. In the \myinput{dblp} collection, the Bitmap-Xor presented filtering ratios up to 6.37$\times$ better than Bitmap-Set, due to the large set sizes presented in \myinput{dblp}. This high filtering ratio difference reduced the runtime by up to $55\%$.

\begin{figure}[t]
\includegraphics[width=8.5cm]{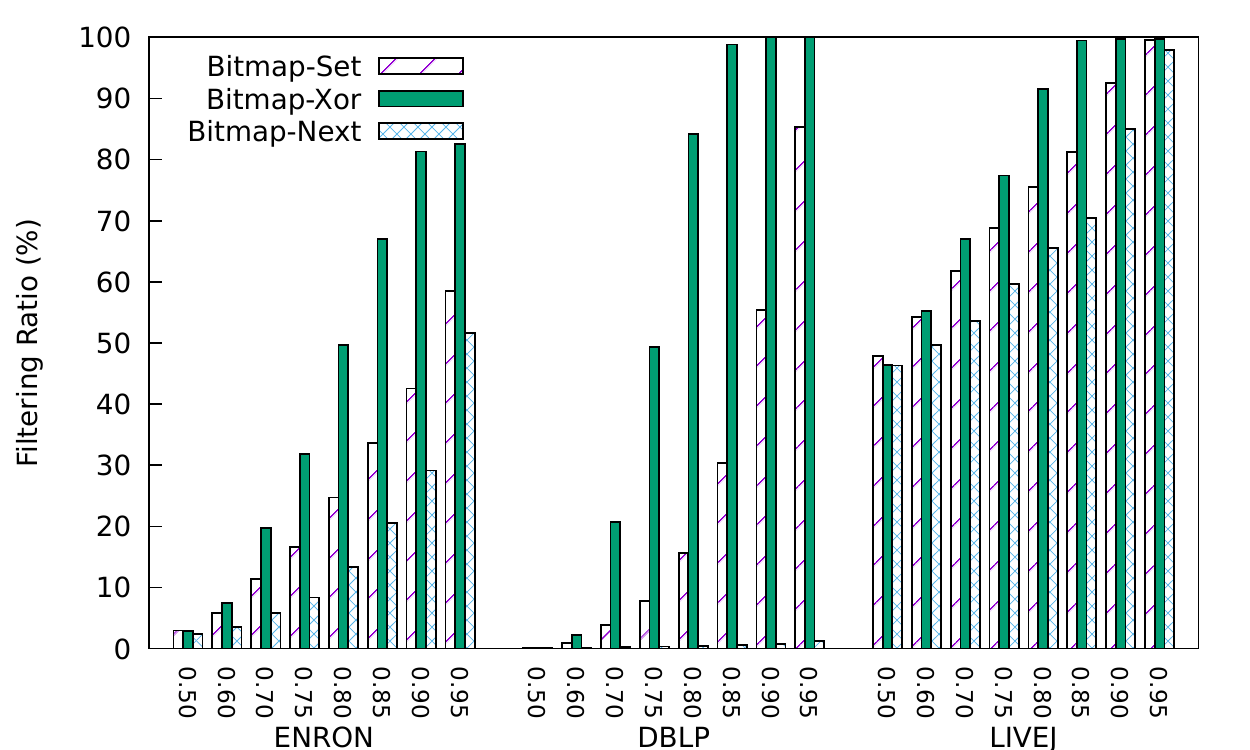}
\caption{Bitmap Filter ratio for different Bitmap generation methods and Jaccard thresholds $\tau_j$ ($b=64$ bits)\label{fig:filter_ratio:bitmap_types}}
\end{figure}

\subsubsection{Filtering Precision}

The Bitmap Filter can be considered as a binary classification rule that classifies a candidate pair as \textit{unfiltered} (positive class) or \textit{filtered} (negative class). Since the Bitmap Filter is an exact method, there is no wrongly filtered pair (false negative), so the Bitmap Filter has a $1.0$ recall (true positives divided by true positives plus false negatives). Nevertheless, there may be some dissimilar pairs that are not filtered (false positive), leading to an extra verification cost. We define the filtering precision as the ratio between similar pairs (true positives) and unfiltered pairs (false positives plus true positives). The number of false positives tends to increase when the bitmaps are too small or when the number or unique tokens are very large. 

The cutoff point defined in Section \ref{sec:bitmap:cutoff} disables the Bitmap Filter whenever the filtering precision falls drastically. In order to show this precision drop off, the filtering precision was measured for the self-joins of \myinput{dblp} and \myinput{enron} collections, without using the cutoff point. The bitmap size used for this experiment was $b=64$. Figure \ref{fig:filterprecision} shows the average filtering precision for the different set sizes found in the collection, considering Jaccard threshold $\tau_j$ varying from 0.50 to 0.80. Although each collection has its own characteristics, we can see the precision drop off in almost the same position in the plots. This clearly represents the cutoff point, such that beyond this point the filter becomes inefficient. So, higher runtime improvements tends to occur in collections whose majority of sets contains less tokens than the observed precision drop off point.

\begin{figure}[t]
\includegraphics[width=8.5cm]{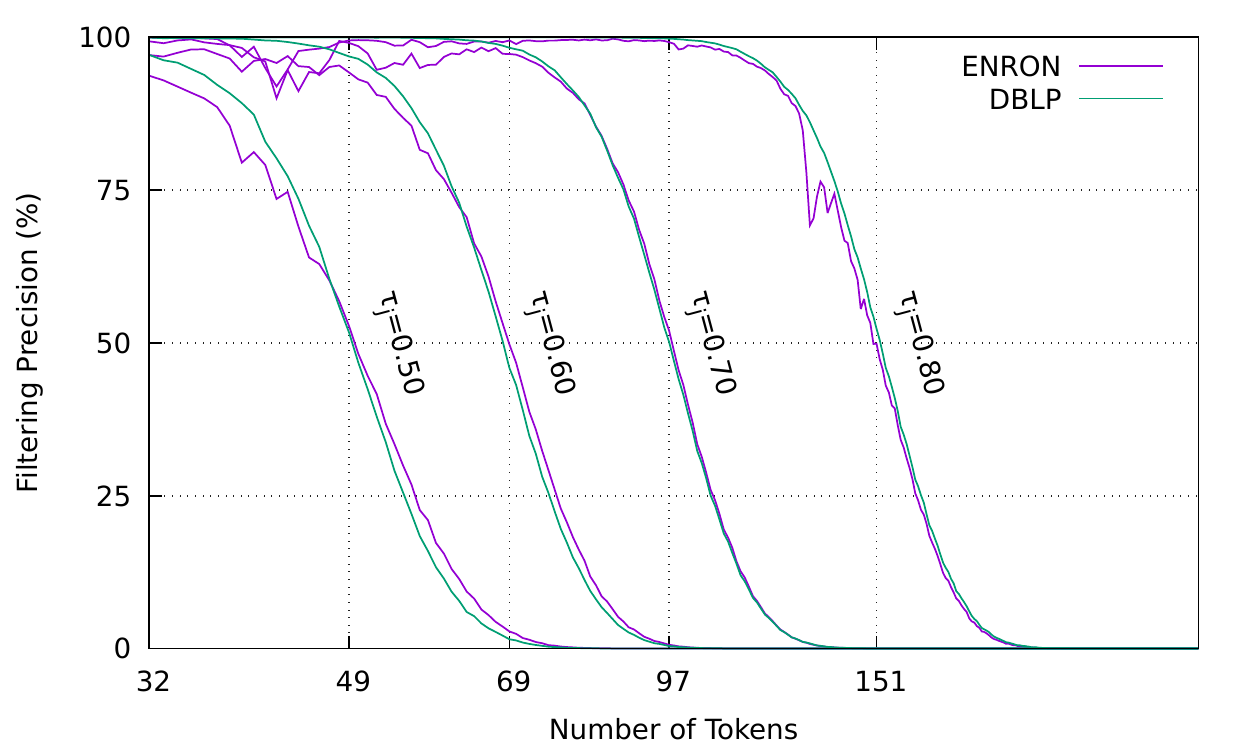}
\caption{Bitmap Filter precision for different number of tokens and Jaccard thresholds $\tau_j$ ($b=64$ bits)\label{fig:filterprecision}}
\end{figure}

\subsection{GPU Experiments}

The GPU algorithm was executed over the collections of up to a 606,770 sets: \myinput{bms-pos}, \myinput{dblp}, \myinput{enron}, \myinput{kosarak}, \myinput{uniform}, \myinput{zipf}. The algorithm was tested with bitmap sizes ranging from $b=64$ to $4096$ and threshold $\tau_j$ from $0.5$ to $0.75$. The number of CUDA blocks was fixed in 512, each one with 256 threads. Table \ref{tb:results_gpu} presents the running times of the GPU algorithm considering the bitmap of size $b$ which presented the best execution time. The running times does not include the file loading time and device initialization, but includes the GPU memory allocation, data transfers from/to the GPU, the GPU kernel execution and the CPU procedures related to the join algorithm. Table \ref{tb:results_gpu} also gives the speedup up of the GPU implementation compared to the best runtime obtained in the experiment with the baseline CPU algorithms (Table \ref{tb:results_baseline}). 


\begin{table}
\scriptsize
\centering
\caption{Runtimes (in seconds) and speedup of the GPU algorithm}
\label{tb:results_gpu}
\setlength\tabcolsep{3.0pt}
\begin{tabular}{|c|r||r|r|r|r|}
\cline{3-6}
\multicolumn{2}{c|}{} & \multicolumn{4}{c|}{Threshold $\tau_j$}\\
\cline{3-6}
\multicolumn{2}{c|}{} & \parbox{1.35cm}{\centering 0.5} & \parbox{1.35cm}{\centering 0.6} & \parbox{1.35cm}{\centering 0.7} & \parbox{1.35cm}{\centering 0.75} \\
\hline
\multirow{5}{*}{\rotatebox[origin=c]{90}{\myinput{bms-pos}}}
& GPU time
&{0.77}&{0.41}&{0.31}&{0.28}\\
& CPU time & (\myalgo{ADA})24.49 & (\myalgo{ADA})8.47 & (\myalgo{ADA})2.61 & (\myalgo{GRO})1.62\\
& speedup
&{31.7$\times$}&{20.6$\times$}&{8.3$\times$}&{5.7$\times$}\\
&bitset $b$
&192bits&128bits&128bits&128bits\\

\hline
\multirow{5}{*}{\rotatebox[origin=c]{90}{\myinput{dblp}}}
& GPU time
&{0.28}&{0.18}&{0.12}&{0.09}\\
& CPU time & (\myalgo{ADA})161.59 & (\myalgo{ADA})75.25 & (\myalgo{ADA})31.01 & (\myalgo{ADA})17.47\\
& speedup
&{577.9$\times$}&{414.8$\times$}&{250.5$\times$}&{190.3$\times$}\\
&bitset $b$
&512bits&320bits&192bits&192bits\\

\hline
\multirow{5}{*}{\rotatebox[origin=c]{90}{\myinput{enron}}}
& GPU time
&{3.27}&{1.92}&{1.26}&{1.00}\\
& CPU time & (\myalgo{ADA})30.35 & (\myalgo{GRO})9.90 & (\myalgo{GRO})2.93 & (\myalgo{GRO})1.69\\
& speedup
&{9.3$\times$}&{5.2$\times$}&{2.3$\times$}&{1.7$\times$}\\
&bitset $b$
&2560bits&2560bits&2048bits&1280bits\\

\hline
\multirow{5}{*}{\rotatebox[origin=c]{90}{\myinput{kosarak}}}
& GPU time
&{7.65}&{2.64}&{1.21}&{0.95}\\
& CPU time & (\myalgo{GRO})37.54 & (\myalgo{GRO})5.82 & (\myalgo{GRO})1.00 & (\myalgo{ALL})0.59\\
& speedup
&{4.9$\times$}&{2.2$\times$}&\textcolor{gray}{0.8$\times$}&\textcolor{gray}{0.6$\times$}\\
&bitset $b$
&512bits&256bits&192bits&128bits\\

\hline
\multirow{5}{*}{\rotatebox[origin=c]{90}{\myinput{uniform}}}
& GPU time
&{1.96}&{0.19}&{0.07}&{0.06}\\
& CPU time & (\myalgo{ADA})56.58 & (\myalgo{GRO})19.49 & (\myalgo{GRO})5.48 & (\myalgo{GRO})2.95\\
& speedup
&{28.9$\times$}&{100.5$\times$}&{79.9$\times$}&{47.8$\times$}\\
&bitset $b$
&192bits&128bits&128bits&128bits\\

\hline
\multirow{4}{*}{\rotatebox[origin=c]{90}{\myinput{zipf}}}
& GPU time
&{0.14}&{0.11}&{0.09}&{0.09}\\
& CPU time & (\myalgo{ADA})1.09 & (\myalgo{ADA})0.61 & (\myalgo{PPJ})0.33 & (\myalgo{PPJ})0.23\\
& speedup
&{7.8$\times$}&{5.8$\times$}&{3.5$\times$}&{2.6$\times$}\\
&bitset $b$
&192bits&128bits&128bits&128bits\\
\hline
\end{tabular} 
\end{table}

It is worth mentioning that the GPU algorithm applies the Bitmap Filter for every possible pair of sets that were not pruned by the Length Filter. Although the algorithm did not include the Prefix Filter, the bitwise operations in the bitmap comparisons are extremely fast in GPU devices. \linebreak Thus, the GPUs are able to process a huge number of bitmap comparisons in short time, with bitmap sizes up to 4096 bits. The best results were noticed in the \myinput{dblp} collection, with speedups of up to 577.9$\times$, followed by \myinput{uniform} (100.5$\times$) and \myinput{bms-pos} (31.7$\times$).

\section{Conclusions}
\label{sec:conclusion}

\begin{sloppypar}
This paper presented a new filtering technique called Bitmap Filter, which is able to reduce the running time of state-of-the-art algorithms that solve the exact Set Similarity Join problem. The Bitmap Filter uses hash functions to create binary words of $b$ bits to represent the set of tokens, with reduced dimension. The comparison of two bitmaps using population count and \textit{xor} operations (hamming distance) allows to infer the number of different tokens in the original sets, and this difference is mapped to an upper bound for the overlap between these sets. 
\end{sloppypar}

Three bitmap generation methods were proposed for generating the bitmaps: Bitmap-Set, Bitmap-Xor and Bitmap-Next. Each method presents better filtering ratio at some threshold $\tau$. The Bitmap-Combined method selects the best method according to the given threshold. 

The Bitmap Filter was implemented for CPU in four state-of-the art algorithm for exact Set Similarity Join: AllPairs, PPJoin, AdaptJoin and GroupJoin. In the experimental results, we showed that the Bitmap Filter was able to speedup 90\% of the experiments, improving the runtime up to 350\% and 43\% on average. Some slowdown was observed in few cases, but restricted to less than 10\%. Using the equations produced in this paper, it is possible to identify the maximum set size such that the Bitmap Filter will be able to filter with good precision. Beyond this cutoff point the Bitmap Filter may be disabled.

The core of the Bitmap Filter is implemented with bitwise operations, allowing it to be extremely fast in GPUs. The paper also presented a GPU implementation where the Bitmap Filter is executed for every possible pair of sets that are not pruned by the Length Filter. Although this GPU implementation is not supposed to scale for large collections, it was still capable to speedup the join operation on collections with up to 606,770 sets. Using an Nvidia Geforce GTX 980 Ti GPU, the experimental results showed speedups of up to 577$\times$ for a \myinput{dblp} sample with 100,000 sets and Jaccard threshold of $0.5$. 

We believe that the simplicity of the Bitmap Filter will allow the development of many other algorithms for the exact set similarity join problem. As future work, we intend to implement the Bitmap Filter in different GPU algorithms using prefix and positional filters. We also plan to assess the Bitmap Filter speedups in other platforms such as Intel Phi, AMD GPU boards, as well as using CPU vector instructions set such as Advanced Vector Extensions (AVX).

\balance


{
\bibliographystyle{abbrv}
\bibliography{bitmap_filter}  

\begin{thebibliography}{10}

\bibitem{PartEnum:2006}
A.~Arasu, V.~Ganti, and R.~Kaushik.
\newblock Efficient exact set-similarity joins.
\newblock In {\em Proceedings of the 32Nd International Conference on Very
  Large Data Bases}, VLDB '06, pages 918--929. VLDB Endowment, 2006.

\bibitem{Augsten:2013:SJR:2601748}
N.~Augsten and M.~H. Bhlen.
\newblock {\em Similarity Joins in Relational Database Systems}.
\newblock Morgan \& Claypool Publishers, 1st edition, 2013.

\bibitem{AllPairs:2007}
R.~J. Bayardo, Y.~Ma, and R.~Srikant.
\newblock Scaling up all pairs similarity search.
\newblock In {\em Proceedings of the 16th International Conference on World
  Wide Web}, WWW '07, pages 131--140, New York, NY, USA, 2007. ACM.

\bibitem{GroupJoin:2012}
P.~Bouros, S.~Ge, and N.~Mamoulis.
\newblock Spatio-textual similarity joins.
\newblock {\em Proc. VLDB Endow.}, 6(1):1--12, Nov. 2012.

\bibitem{Broder:1997:Clustering}
A.~Z. Broder, S.~C. Glassman, M.~S. Manasse, and G.~Zweig.
\newblock Syntactic clustering of the web.
\newblock {\em Computer Networks and ISDN Systems}, 29(8-13):1157--1166, Sept.
  1997.

\bibitem{Chakrabarti:2015:LSH}
A.~Chakrabarti and S.~Parthasarathy.
\newblock Sequential hypothesis tests for adaptive locality sensitive hashing.
\newblock In {\em Proceedings of the 24th International Conference on World
  Wide Web}, WWW '15, pages 162--172, Republic and Canton of Geneva,
  Switzerland, 2015. International World Wide Web Conferences Steering
  Committee.

\bibitem{Chaudhuri:2006}
S.~Chaudhuri, V.~Ganti, and R.~Kaushik.
\newblock A primitive operator for similarity joins in data cleaning.
\newblock In {\em Proceedings of the 22Nd International Conference on Data
  Engineering}, ICDE '06, pages 5--, Washington, DC, USA, 2006. IEEE Computer
  Society.

\bibitem{Cohen:2000:DIU:352595.352598}
W.~W. Cohen.
\newblock Data integration using similarity joins and a word-based information
  representation language.
\newblock {\em ACM Trans. Inf. Syst.}, 18(3):288--321, July 2000.

\bibitem{Gionis:1999:LSH}
A.~Gionis, P.~Indyk, and R.~Motwani.
\newblock Similarity search in high dimensions via hashing.
\newblock In {\em Proceedings of the 25th International Conference on Very
  Large Data Bases}, VLDB '99, pages 518--529, San Francisco, CA, USA, 1999.
  Morgan Kaufmann Publishers Inc.

\bibitem{GramCount:2001}
L.~Gravano, P.~G. Ipeirotis, H.~V. Jagadish, N.~Koudas, S.~Muthukrishnan, and
  D.~Srivastava.
\newblock Approximate string joins in a database (almost) for free.
\newblock In {\em Proceedings of the 27th International Conference on Very
  Large Data Bases}, VLDB '01, pages 491--500, San Francisco, CA, USA, 2001.
  Morgan Kaufmann Publishers Inc.

\bibitem{LSH:1998}
P.~Indyk and R.~Motwani.
\newblock Approximate nearest neighbors: Towards removing the curse of
  dimensionality.
\newblock In {\em Proceedings of the Thirtieth Annual ACM Symposium on Theory
  of Computing}, STOC '98, pages 604--613, New York, NY, USA, 1998. ACM.

\bibitem{Jiang:2014}
Y.~Jiang, G.~Li, J.~Feng, and W.-S. Li.
\newblock String similarity joins: An experimental evaluation.
\newblock {\em Proc. VLDB Endow.}, 7(8):625--636, Apr. 2014.

\bibitem{mann2016}
W.~Mann, N.~Augsten, and P.~Bouros.
\newblock An empirical evaluation of set similarity join techniques.
\newblock {\em Proc. VLDB Endow.}, 9(9):636--647, May 2016.

\bibitem{CUDAGuide8.0}
{Nvidia}.
\newblock {\em Nvidia CUDA Programming Guide 8.0}.
\newblock {Nvidia}, 2017.

\bibitem{PEYRAVIAN1998171}
M.~Peyravian, A.~Roginsky, and A.~Kshemkalyani.
\newblock On probabilities of hash value matches.
\newblock {\em Computers \& Security}, 17(2):171 -- 176, 1998.

\bibitem{Ribeiro:2011}
L.~A. Ribeiro and T.~H\"{a}rder.
\newblock Generalizing prefix filtering to improve set similarity joins.
\newblock {\em Inf. Syst.}, 36(1):62--78, Mar. 2011.

\bibitem{Satuluri:2012:LSH}
V.~Satuluri and S.~Parthasarathy.
\newblock Bayesian locality sensitive hashing for fast similarity search.
\newblock {\em Proc. VLDB Endow.}, 5(5):430--441, Jan. 2012.

\bibitem{nehalem2008inside}
R.~Singhal.
\newblock {Inside Intel{\textregistered} Next Generation Nehalem
  Microarchitecture}.
\newblock In {\em Hot Chips}, volume~20, page~15, 2008.

\bibitem{SOHRABI20171:LSH}
M.~K. Sohrabi and H.~Azgomi.
\newblock Parallel set similarity join on big data based on locality-sensitive
  hashing.
\newblock {\em Science of Computer Programming}, 145(Supplement C):1 -- 12,
  2017.

\bibitem{Spertus:2005:ESM:1081870.1081956}
E.~Spertus, M.~Sahami, and O.~Buyukkokten.
\newblock Evaluating similarity measures: A large-scale study in the orkut
  social network.
\newblock In {\em Proceedings of the Eleventh ACM SIGKDD International
  Conference on Knowledge Discovery in Data Mining}, KDD '05, pages 678--684,
  New York, NY, USA, 2005. ACM.

\bibitem{Theobald:2008:SRE:1390334.1390431}
M.~Theobald, J.~Siddharth, and A.~Paepcke.
\newblock Spotsigs: Robust and efficient near duplicate detection in large web
  collections.
\newblock In {\em Proceedings of the 31st Annual International ACM SIGIR
  Conference on Research and Development in Information Retrieval}, SIGIR '08,
  pages 563--570, New York, NY, USA, 2008. ACM.

\bibitem{Vernica:2010:MapReduce}
R.~Vernica, M.~J. Carey, and C.~Li.
\newblock Efficient parallel set-similarity joins using mapreduce.
\newblock In {\em Proceedings of the 2010 ACM SIGMOD International Conference
  on Management of Data}, SIGMOD '10, pages 495--506, New York, NY, USA, 2010.
  ACM.

\bibitem{AdaptJoin:2012}
J.~Wang, G.~Li, and J.~Feng.
\newblock Can we beat the prefix filtering?: An adaptive framework for
  similarity join and search.
\newblock In {\em Proceedings of the 2012 ACM SIGMOD International Conference
  on Management of Data}, SIGMOD '12, pages 85--96, New York, NY, USA, 2012.
  ACM.

\bibitem{Wang:2017:LSR:3099622.3099624}
X.~Wang, L.~Qin, X.~Lin, Y.~Zhang, and L.~Chang.
\newblock Leveraging set relations in exact set similarity join.
\newblock {\em Proc. VLDB Endow.}, 10(9):925--936, May 2017.

\bibitem{PPJoin:2011}
C.~Xiao, W.~Wang, X.~Lin, J.~X. Yu, and G.~Wang.
\newblock Efficient similarity joins for near-duplicate detection.
\newblock {\em ACM Trans. Database Syst.}, 36(3):15:1--15:41, Aug. 2011.

\bibitem{Yu:2017:LSH}
C.~Yu, S.~Nutanong, H.~Li, C.~Wang, and X.~Yuan.
\newblock A generic method for accelerating lsh-based similarity join
  processing.
\newblock {\em IEEE Transactions on Knowledge and Data Engineering},
  29(4):712--726, April 2017.

\bibitem{Zhai:2011:LSH}
J.~Zhai, Y.~Lou, and J.~Gehrke.
\newblock Atlas: A probabilistic algorithm for high dimensional similarity
  search.
\newblock In {\em Proceedings of the 2011 ACM SIGMOD International Conference
  on Management of Data}, SIGMOD '11, pages 997--1008, New York, NY, USA, 2011.
  ACM.

\bibitem{Zhang:2017:Exact}
Y.~Zhang, X.~Li, J.~Wang, Y.~Zhang, C.~Xing, and X.~Yuan.
\newblock An efficient framework for exact set similarity search using tree
  structure indexes.
\newblock In {\em 2017 IEEE 33rd International Conference on Data Engineering
  (ICDE)}, pages 759--770, April 2017.

\end{thebibliography}
}



%
%
%
%

\end{document}